\documentclass[journal,12pt]{IEEEtran}
\normalsize
\usepackage{cite}
\usepackage{graphicx}
\usepackage{ifpdf}
\ifpdf
  \usepackage{epstopdf}
\fi
\usepackage[cmex10]{amsmath}
\usepackage{amsthm}
\usepackage{amssymb}
\usepackage{cases}
\usepackage{bm}
\usepackage[caption=false,font=footnotesize]{subfig}
\usepackage{fixltx2e}
\usepackage{color}
\usepackage{soul}

\hyphenation{op-tical net-works semi-conduc-tor}

\begin{document}
%
\title{Statistical Multiplexing Gain Analysis of Heterogeneous Virtual Base Station Pools in Cloud Radio Access Networks}
%
%
%

\author{Jingchu Liu, ~\IEEEmembership{Student Member,~IEEE},
        Sheng Zhou, ~\IEEEmembership{Member,~IEEE},\\
        Jie Gong, ~\IEEEmembership{Member,~IEEE},
        Zhisheng Niu, ~\IEEEmembership{Fellow,~IEEE},\\
        and Shugong Xu, ~\IEEEmembership{Fellow,~IEEE}
\thanks{J. Liu, S. Zhou, and Z. Niu are with Department of Electronic Engineering and Tsinghua National Laboratory for Information Science and Technology, Tsinghua University, Beijing 100084, China. Email:
        liu-jc12@mails.tsinghua.edu.cn, sheng.zhou@tsinghua.edu.cn, niuzhs@tsinghua.edu.cn.}
\thanks{J. Gong was with Department of Electronic Engineering and Tsinghua National Laboratory for Information Science and Technology, Tsinghua University, Beijing 100084, China. He is now with School of Data and Computer Science, Sun Yat-sen University, Guangzhou 510006, Guangdong, China. Email: gongj26@mail.sysu.edu.cn.}
\thanks{S. Xu is with Intel Cooperation. Email: shugong.xu@intel.com.}
\thanks{Part of this work has been presented at IEEE Globecom 2014. This  work  is  sponsored  in  part  by  the  {National Basic  Research  Program  of China  (973 Program: 2012CB316001}), the National Science Foundation of China (NSFC) under grant No. 61201191, No. 61321061, No. 61401250, and No. 61461136004, and {Intel Collaborative Research Institute for Mobile Networking and Computing}.}}

%
%



\maketitle
\theoremstyle{plain}
\newtheorem{kcr}{Theorem}
\newtheorem{rev}[kcr]{Theorem}
\newtheorem{complex}[kcr]{Theorem}
\newtheorem{lpl}[kcr]{Theorem}
\newtheorem{bpa}[kcr]{Theorem}
\newtheorem{kpt}{Corollary}
\theoremstyle{plain}
\newtheorem{Remark: Possion}{Remark}
\newtheorem{Remark: Complexity}[Remark: Possion]{Remark}
\newtheorem{Remark: Decomposition}[Remark: Possion]{Remark}
\newtheorem{Remark: InfiniteK}[Remark: Possion]{Remark}
\newtheorem{Remark: HetDecompose}[Remark: Possion]{Remark}
\theoremstyle{plain}
\newtheorem{rpg}{Definition}

\begin{abstract}
Cloud radio access network {(C-RAN)} is proposed recently to reduce network cost, enable cooperative communications, and increase system flexibility through centralized baseband processing. By pooling multiple virtual base stations (VBSs) and consolidating their stochastic computational tasks, the overall computational resource can be reduced, achieving the so-called statistical multiplexing gain. In this paper, we evaluate the statistical multiplexing gain of VBS pools using a multi-dimensional Markov model, which captures the session-level dynamics and the constraints imposed by both radio and computational resources. Based on this model, we derive a recursive formula for the blocking probability and also a closed-form approximation for it in large pools. These formulas are then used to derive the session-level statistical multiplexing gain of both real-time and delay-tolerant traffic. Numerical results show that {VBS} pools can achieve more than $75\%$ of the maximum pooling gain with $50$ VBSs, but further convergence to the upper bound (large-pool limit) is slow because of the quickly diminishing marginal pooling gain, which is inversely proportional to a factor between the one-half and three-fourth power of the pool size. We also find that the pooling gain is more evident under light traffic load and stringent Quality of Service (QoS) requirement.
\end{abstract}

\begin{IEEEkeywords}
C-RAN, {VBS pooling}, statistical multiplexing.
\end{IEEEkeywords}

%
\IEEEpeerreviewmaketitle

\section{Introduction}
\IEEEPARstart{I}{n} recent years, the proliferation of mobile devices such as smart phones and tablets, together with the diverse applications enabled by mobile Internet, has triggered the exponential growth of mobile data traffic \cite{vni}. To accommodate the rapid traffic growth, cellular networks have been continuously evolving toward smaller cell size, wider bandwidth, and more advanced transmission technologies. However, the problems that arise, such as the increased interference and operational cost, are difficult to be solved via the traditional radio access network ({RAN}) architecture, in which the processing functionalities are packed into stand-alone base stations (BSs) and the cooperation between {BSs} is restricted by the limited inter-BS backhaul bandwidth.

To overcome the shortcomings of the traditional RAN architecture, cloud RAN {(C-RAN)} \cite{cran} is proposed with centralized baseband processing. {C-RAN} can facilitate the adoption of cooperative signal processing and potentially reduce the operational cost. A similar idea is also proposed under with the name wireless network cloud {(WNC)} \cite{wnc}. This kind of novel architecture has attracted substantial attentions recently: the key building blocks of {C-RAN} are investigated and its major use cases are identified \cite{ngmn, concert, 2015arXiv151207743S, nguyen2016sdn}. Centralized processing is combined with dynamical fronthaul switching to address the mobility and energy efficiency issues of small cells in \cite{colony,fluidnet}. Concerning realization-related issues, it is demonstrated in \cite{vbs,cloudiq, bigstation, pran, sdhcn} that BBU functionalities can be implemented as software, i.e. virtual base station (VBS), which runs on general purpose platforms (GPP). Compared with specialized-platform-based implementations, GPP-based implementation is more flexible in terms of the implementation of new functionalities and the management of computational resource. Further more, a {VBS} pool can be constructed by consolidating multiple VBSs to share the same set of computational resource. In this way, the computational resource can be utilized more efficiently and related cost can be reduced.

Despite the evident advantages of {C-RAN}, the massive bandwidth requirement of its fronthaul network poses a serious challenge: transmitting the baseband sample of a single $20$MHz {LTE} antenna-carrier ({AxC}) requires around 1Gbps link bandwidth \cite{cpri, fh}. Large-scale centralization will thus incur enormous fronthaul expenditure and potentially cancel out the gains. Fortunately, it is observed in \cite{cloudiq, multiplex} that substantial statistical multiplexing gain can be obtained even with small-scale centralization, justifying the deployment of small clusters of C-RAN. Yet, these results are obtained from simulations that are based on short-term small-scale traffic logs, and a generalized analytical model is in need to derive the optimal {VBS} cluster size. To this end, a session-level {VBS} pool model is proposed in \cite{gomez13} under the assumption of unconstrained radio resource and dynamic resource management. Here user sessions represent the time period in which users occupy computational resource in the pool. Nevertheless, this model does not reflect two realistic factors. Firstly, radio resource are often the main performance bottleneck for real networks, and thus the influence of radio resource should be reflected in the {VBS} pool model. Secondly, dynamic resource management, which re-assigns resources at the arrival and departure of each user session, may incur too much control overhead and overload the system \cite{cloudiq}. Hence, semi-dynamic resource management, which assigns resources on much larger time scales (e.g. hours to days) than the arrival and departure of user sessions (e.g. seconds to minutes), is more realistic. This assumption is also reasonable because the traffic statistics also vary in similarly large time scales, therefore the management plan designed for some traffic statistics can be useful for a fairly long period, and only need to be occasionally adjusted in the long run.

In our previous work \cite{vbsmodel}, we analyze the statistical multiplexing gain of homogeneous {VBS} pools, in which each VBS has the same traffic arrival, resource configuration, and service strategy. We derive a product-form expression for the stationary distribution of user sessions in each VBS and give a recursive method to compute the session blocking probability of the VBS pool. In this paper, we extend these results to heterogeneous {VBS} pools, in which there are multiple classes of VBSs with different session arrival, resource configurations, and service strategies. The computational complexity of the recursive method is also analyzed. Under the assumption of large {VBS} pools, we also derive a closed-form approximation for the blocking probability. We show through simulation that the approximation is precise even for medium-sized pools with around $50$ {VBSs}. We then use this approximation to quantitatively investigate the statistical multiplexing gain of VBS pools under the influence of different different factors, including pool sizes, VBS heterogeneity, traffic load, and the desired levels of {QoS}.

The main contributions of this paper are as follows:
\begin{itemize}
	\item
	We propose a realistic session-level model for heterogeneous {VBS} pools with both radio and computational resource constraints and semi-dynamic resource management. We show that this model constitutes a continuous-time multi-dimensional Markov chain and derive its product-form stationary distribution. We also illustrate how this model can be used to analyze real-time and delay-tolerant traffics. 
	\item
	We give a recursive method to compute the blocking probability for the proposed model. This method has quadratic computational complexity, much lower than brute-force evaluation which has exponential complexity. For large {VBS} pools, we also derive a closed-form formula to approximate the blocking probability.
	\item
	We provide an in-depth analysis on the statistical multiplexing gain of {VBS} pools. We show numerically the influence of various factors including the pool size, traffic load, and {QoS} requirements. We also prove that the statistical multiplexing gain increases slowly as the pool size grows large, with the residual gain diminishing at a speed between $|\bm{M}|^{-3/4}$ and $|\bm{M}|^{-1/2}$. Here $|\bm{M}|$ denotes the pool size.	
\end{itemize}

The rest of the paper is organized as follows. Section \ref{sec:model} introduces the proposed model and presents the product-form stationary distribution of user sessions. Section \ref{sec:pb} derives the recursive formula for the blocking probability and gives a closed-form approximation for large VBS pools. In Section \ref{sec:gain}, we derive the expression of statistical multiplexing gains and apply it to both real-time and delay-tolerant traffics in Section \ref{sec:eg}. Section \ref{sec:results} presents the numerical results and discusses the implications on realistic system design. Finally the paper is concluded in section \ref{sec:con}.

\section{Model Formulation} \label{sec:model}
In this section, we introduce the Markov model for {VBS} pools and derive its stationary distribution. The model captures the session-level dynamics in a {VBS} pool. To endow our model with enough generality, we assume $V$ different classes of {VBSs}. The total number of {VBSs} in class $v$  ($v = 1,2,\cdots,V$) is $M_v$. Each {VBS} in class $v$ is assigned with $K_v$ units of radio resource. To perform baseband signal processing on user sessions, all {VBSs} share a total of $N$ units of computational resource. The overall setting is illustrated in Fig. \ref{overview}.  We assume homogeneous resource demands: every active session is assumed to occupy one unit of radio resource and one unit of computational resource. Note that computational workloads that are independent of user dynamics do exist in cellular systems. However, the dominant consumers for computational resources are mostly per-user functions and thus the overall workload roughly follows a linear relationship with the number of users\cite{cloudiq}. For simplicity, hereafter we denote radio and computational resource by r-servers and c-servers, respectively.

    \begin{figure}[!t]
    	\centering
    	\includegraphics[width=3.5in, bb=56 374 527 634]{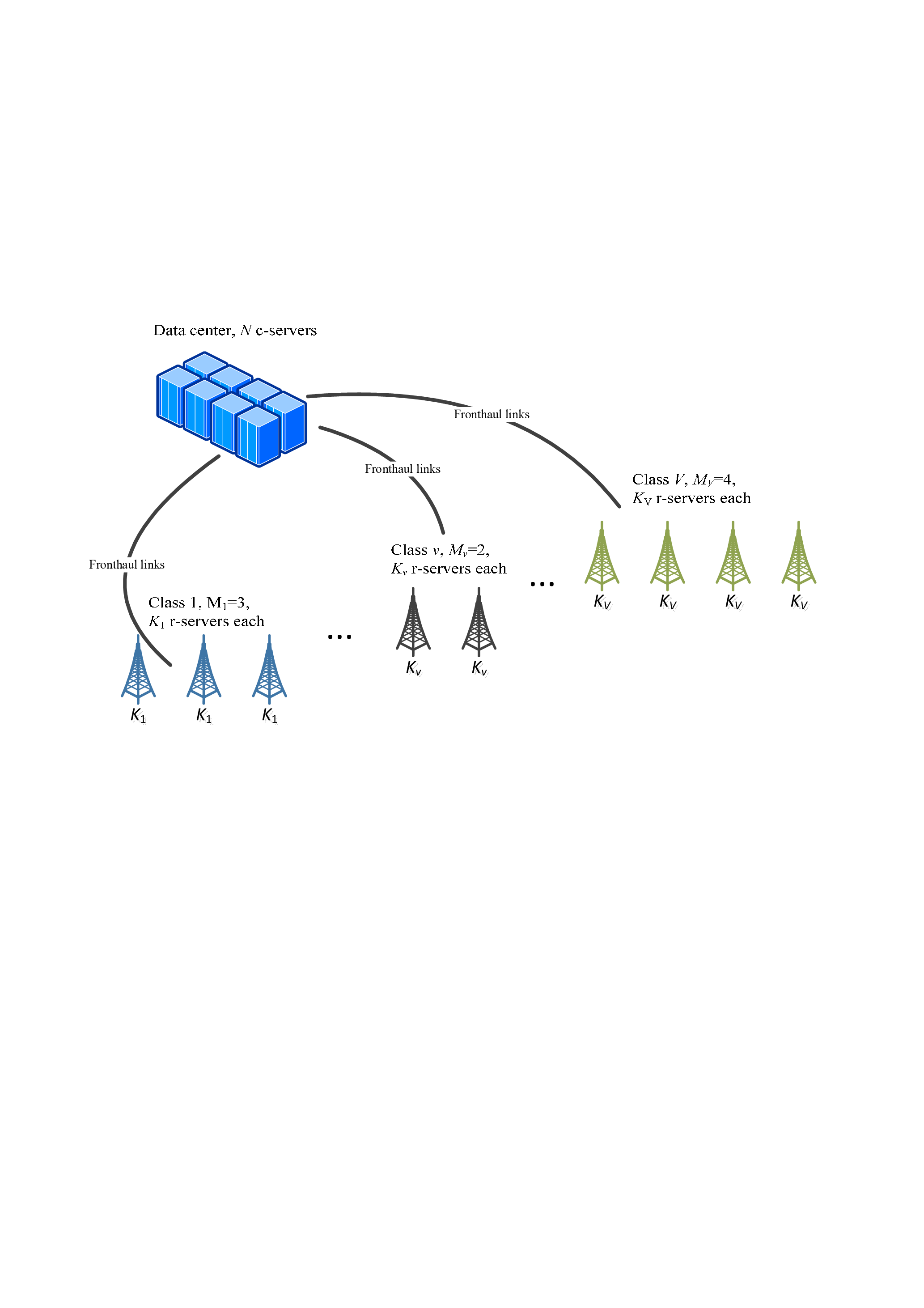}
    	\caption{An example of heterogeneous VBS pool. There are $V$ classes of VBSs, each class may have different number of VBSs and amount of radio resource. These VBSs are consolidated in a data center and shares $N$ units of computational resource.}
    	\label{overview}
    \end{figure}

\subsection{Session Arrival, Service Discipline, and Admission Control}\label{subsec:arrival}  
User sessions are initiated following independent Poisson processes in the coverage area of their serving {VBSs}. Obviously, the overall session arrival rate in a {VBS} is proportional to its coverage area. We allow {VBSs} in different classes to have different sizes of coverage areas, and consequently, different aggregated session arrival rates. We denote the arrival rate for class-$v$ {VBSs} by $\lambda_{v}$.

Each user session demands an exponentially distributed amount of service capacity before it leaves. Note the notion of service capacity can be flexibly interpreted according to the specific scenario this model is applied to. For example, service capacity can be considered as time duration for voice call sessions or the amount of information bits for cellular data sessions. For statistical QoS scenarios such as soft-real-time video, service capacity can still be interpreted as duration or information bits. The physical resources that enables such capacity is defined as the minimum amount of resource that can constantly satisfy a session. This means if the instantaneous requirement of a session is lower than provided, the remaining resources will become under-utilized. We assume that a {VBS} pool scheduler manages the service capacity so that the service capacity assigned to a class-$v$ {VBS} is a function of the total number of sessions in this {VBS}, and the assigned capacity is equally divided among these sessions for fairness  Denote the number of sessions in the $m$-th {VBS} of class-$v$ at time $t$ as $U_{v,m}(t)$. Then the above service strategy can be translated into a session departure rate function $f_v(U_{v,m}(t))$ for sessions in the $m$-th {VBS} of class-$v$ at time $t$. The above Poisson assumptions have been widely used in existing literature to evaluate the impact of randomness on system performance \cite{Borst05, bk_jsac}.  

To guarantee that active sessions always have enough r-servers and c-servers, the {VBS} pool has to enforce admission control on the arriving sessions: whenever a new session arrives, an admission control agent in the {VBS} pool will decide whether or not to accept this session according to the current resource usage. For class-$v$ sessions, the new session is accepted only if the number of r-servers in its serving {VBS} is less than $K_v$ and the number of c-servers in the pool is less than $N$; otherwise the session is blocked.

\subsection{State Transitions}   	
Recall that we denote the number of sessions in the $m$-th {VBS} of class-$v$ by $U_{v,m}(t)$, we can further describe the session dynamics in the {VBS} pool with a continuous-time stochastic process $$\bm{U}(t) = (U_{1,1}(t),\cdots,U_{1,M_1},\cdots,U_{V,1},\cdots,U_{V,M_V}(t))^T.$$ Given the Markovian property of the arrival and service of processes, it is obvious that $\bm{U}(t)$ is a Markov chain. Taking the admission control policy into consideration, we can get the set of possible system states
\begin{equation}\label{stateset}
\begin{aligned}
	\bm{U}(t) \in \mathbb{U} =  \{ & \bm{u} \mid 0 \le u_{v,m} \le K_v, \\
	& 0 \le \sum_{v=1}^{V}\sum_{m=1}^{M_v} u_{v,m} \le N, \\
	& u_{v,m} \in \mathbb{N}\},
\end{aligned}
\end{equation}
where $\bm{u} = (u_{1,1},\cdots,u_{1,M_1},\cdots,u_{V,1},\cdots,u_{V,M_V})^{T}$ is the state vector. Because the session arrivals and departures are Markovian, $\bm{U}(t)$ is a multi-dimensional birth-and-death process. The transition rate of $\bm{U}(t)$ from an arbitrary state $\bm{u^{(i)}}$ to another state $\bm{u^{(j)}}$ is:
\begin{equation}\label{transition}
	q_{\bm{u^{(i)}} \bm{u^{(j)}}} =
	\begin{cases}
        \lambda_v,         & \text{if $\bm{u^{(j)}}-\bm{u^{(i)}}= \bm{e_{v,m}}$}\\
        f_v(u_{v,m}^{(i)}), & \text{if $\bm{u^{(j)}}-\bm{u^{(i)}}= -\bm{e_{v,m}}$}\\
        0,               & \text{otherwise}
    \end{cases},
\end{equation}
where $u_{v,m}^{(i)}$ is the $(\sum\limits_{w=1}^{v-1}M_w + m)$-th entry of $\bm{u^{(i)}}$, and $$\bm{e_{v,m}} = (0,\cdots,0,\underbrace{1}_{(\sum\limits_{w=1}^{v-1}M_w + m)\text{-th}},0,\cdots,0)^T$$ is a column vector of length $ \sum\limits_{v=1}^{V}M_v$. For the ease of understanding, we illustrate the state transition graph of a simple example with $V=1$, $M_1=2$, $K_1=3$, $N=4$ and $f_1(n) = n\mu_0$ in Fig. \ref{2d}.
    
A similar problem has been formulated in the context of \emph{Stochastic Knapsack Problem}\cite{ross89}, which is a stochastic extension of the traditional knapsack problem. Specifically, the items which occupy a certain amount of space come into and leave a knapsack randomly. The model we formulate is mathematically equivalent to stochastic knapsacks under \emph{coordinate convex} \cite{AeinKosovych-37} admission control policies. However, the focus of these previous work was to find the policy that maximizes the reward of storing items, and the analysis was limited to problems with small dimensionality because the complexity increases dramatically as the number of item classes grows. In contrast, we aim at evaluating the blocking probability instead of reward, and we have to address the large-dimensionality problems due to the large sizes of {VBS} pools.
    \begin{figure}[!t]
    \centering
    \includegraphics[width=2.8in, bb=92 420 511 799]{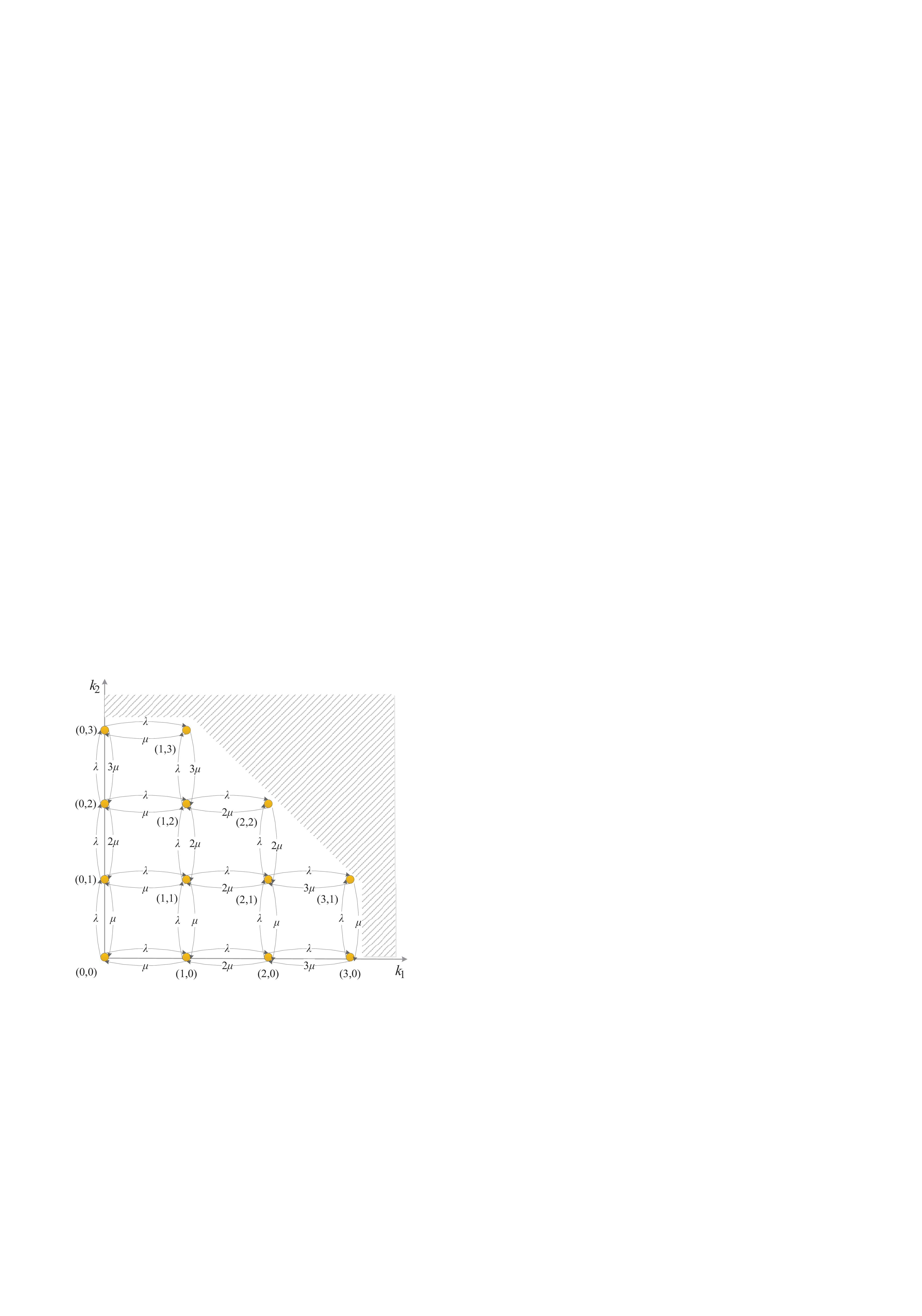}
    \caption{Transition graph of a {VBS} pool with two {VBSs}. The $k_1$ and $k_2$ axes indicate the number of active sessions in these two {VBSs}, respectively. Each yellow point represents a possible pool state, and the states in the gray region are prohibited because of the computational and radio resource constraints. ($K=3$, $N=4$, and $f_1(n) = n\mu_0$)}
    \label{2d}
    \end{figure}

\subsection{Stationary Distribution}
To perform further analysis, we need to derive the stationary distribution of $\bm{U}(t)$. Fortunately, the model we formulated guarantees the reversibility of $\bm{U}(t)$ as in the following theorem, which in turn results in a product-form expression for the stationary distribution.
\begin{rev}[Reversibility]\label{Theo_rev}
   A continuous-time Markov chain with a state set as in (\ref{stateset}) and transition rates as in (\ref{transition}) is reversible.
\end{rev}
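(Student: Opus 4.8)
The plan is to prove reversibility directly from the definition, by exhibiting a candidate stationary distribution $\pi$ and checking that it satisfies the detailed balance (local balance) equations with the transition rates in (\ref{transition}); since detailed balance implies reversibility, this suffices. The starting point is the observation that, apart from the two global couplings imposed by the constraints $u_{v,m}\le K_v$ and $\sum_{v,m}u_{v,m}\le N$, the chain is nothing but a collection of independent one-dimensional birth-and-death processes, one per coordinate $(v,m)$, with constant birth rate $\lambda_v$ and state-dependent death rate $f_v(\cdot)$. Each such one-dimensional chain is reversible with the standard product-of-ratios invariant measure, so I would first guess the product form obtained by multiplying these factors together.

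Concretely, define for each class $v$ the single-VBS factor
$$g_v(n) = \prod_{j=1}^{n}\frac{\lambda_v}{f_v(j)}, \qquad g_v(0)=1,$$
and propose
$$\pi(\bm{u}) = \frac{1}{G}\prod_{v=1}^{V}\prod_{m=1}^{M_v} g_v(u_{v,m}), \qquad \bm{u}\in\mathbb{U},$$
where $G=\sum_{\bm{u}\in\mathbb{U}}\prod_{v,m}g_v(u_{v,m})$ normalizes $\pi$ over the admissible set (\ref{stateset}). The key step is then to verify detailed balance across every feasible edge of the transition graph. The only edges with nonzero rate join a state $\bm{u}$ to $\bm{u}+\bm{e_{v,m}}$, so it is enough to check, whenever both endpoints lie in $\mathbb{U}$, that $\pi(\bm{u})\,\lambda_v = \pi(\bm{u}+\bm{e_{v,m}})\,f_v(u_{v,m}+1)$. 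Using the ratio $\pi(\bm{u}+\bm{e_{v,m}})/\pi(\bm{u}) = g_v(u_{v,m}+1)/g_v(u_{v,m}) = \lambda_v/f_v(u_{v,m}+1)$, this reduces to an immediate cancellation, and detailed balance holds on every such edge.

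The subtle point, and the step I expect to require the most care, is making sure the two global constraints do not destroy this argument. The crucial fact is that detailed balance is a \emph{pairwise, local} condition: it only ever relates two states connected by a single transition, so deleting states (together with their incident transitions) to enforce $u_{v,m}\le K_v$ and $\sum_{v,m}u_{v,m}\le N$ cannot invalidate a balance equation that survives. What must be argued is that $\mathbb{U}$ is \emph{coordinate convex}, i.e. if $\bm{u}\in\mathbb{U}$ and $u_{v,m}\ge 1$ then $\bm{u}-\bm{e_{v,m}}\in\mathbb{U}$; this guarantees that the truncation prunes whole edges rather than leaving ``dangling'' half-transitions, so the retained process is exactly the restriction of the reversible product chain to $\mathbb{U}$ with $\pi$ as its renormalized invariant measure. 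Conceptually this is the classical truncation property of reversible chains familiar from the stochastic-knapsack setting cited above, and invoking it yields an alternative, calculation-free proof.

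As a further cross-check that is independent of the guessed $\pi$, I would note that Kolmogorov's criterion holds trivially. Any cycle in the grid decomposes into elementary unit squares $\bm{u}\to\bm{u}+\bm{e_{v,m}}\to\bm{u}+\bm{e_{v,m}}+\bm{e_{w,l}}\to\bm{u}+\bm{e_{w,l}}\to\bm{u}$, and around such a square the clockwise rate product $\lambda_v\lambda_w f_v(u_{v,m}+1)f_w(u_{w,l}+1)$ equals its counter-clockwise counterpart term by term, since the birth rates depend only on the class and the death rates only on the coordinate being changed. Coordinate convexity again ensures that when such a square is traversed with nonzero rates all four corners belong to $\mathbb{U}$, so the criterion is satisfied on every realizable cycle and the chain is reversible.
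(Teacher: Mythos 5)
Your proposal is correct, but it proves reversibility by a genuinely different route than the paper. The paper's proof (Appendix A) never exhibits a stationary distribution: it verifies Kolmogorov's criterion directly on an \emph{arbitrary} finite cycle of states, first disposing of cycles containing a zero rate (by (\ref{transition}), a forward rate vanishes exactly when its reverse does), and then, for cycles with all rates positive, counting for each coordinate $(v,m)$ and each occupancy level $u_{v,m}$ the number of up-transitions $n^{+}_{v,m,u_{v,m}}$ and down-transitions $n^{-}_{v,m,u_{v,m}}$; closedness of the loop forces $n^{+}_{v,m,u_{v,m}}=n^{-}_{v,m,u_{v,m}}$, so the clockwise and counter-clockwise rate products coincide factor by factor. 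Reversibility then \emph{yields} the product form (\ref{static}) via local balance, which is the expository order the paper wants. You go the opposite way: guess the product measure $\pi(\bm{u})\propto\prod_{v,m}g_v(u_{v,m})$, check detailed balance on every feasible edge (trivial cancellation, since the birth rate depends only on the class and the death rate only on the coordinate being changed), and invoke coordinate convexity of $\mathbb{U}$ so that the constraints truncate the unconstrained product chain along whole edges --- this is Kelly's truncation lemma, and it is essentially the argument of the more general reference \cite{kaufman81} that the paper cites before offering its alternative proof. Your route is shorter, delivers the stationary distribution and the criterion simultaneously, and makes the role of coordinate convexity explicit; the paper's route is self-contained at the level of rates alone and keeps its later derivation of (\ref{static}) from local balance logically downstream of the theorem. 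One caution on your final ``cross-check'': the claim that every cycle in the truncated grid decomposes into elementary unit squares all of whose corners lie in $\mathbb{U}$ is precisely the nontrivial combinatorial point that the paper's counting argument is designed to avoid --- it is plausible for this coordinate-convex region (a square's three lower corners lie in $\mathbb{U}$ whenever its top corner does), but as stated it is an unproven assertion about the cycle space of the induced subgraph, so do not lean on it; your detailed-balance argument is complete without it.
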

The reversibility of $\bm{U}(t)$ has been proved in \cite{kaufman81} for more general cases. We provide an alternative proof in the Appendix using Kolmogorov's Criterion of Reversibility. Since $\bm{U}(t)$ is reversible, the local balance equation holds in the statistical equilibrium
        \begin{equation}\label{local}
		\begin{aligned}
		&\Pr\left\{ \bm{U}(\infty) = \bm{u^{(i)}} \right\}  q_{\bm{u^{(i)}}\bm{u^{(j)}}}\\
		& = \Pr\left\{ \bm{U}(\infty) = \bm{u^{(j)}} \right\}  q_{\bm{u^{(j)}}\bm{u^{(i)}}}.
		\end{aligned}
        \end{equation}
For simplicity, $\Pr\left\{ \bm{U}(\infty) = \bm{u} \right\}$ is hereafter denoted by $\Pr\left\{ \bm{u} \right\}$ or $\Pr\left\{ u_{1,1},\cdots,u_{1,M_1},\cdots,u_{V,1},\cdots,u_{V,M_V} \right\}$. Without loss of generality, let $\bm{u^{(i)}}$ and $\bm{u^{(j)}}$ be two arbitrary neighboring states:
        $$\bm{u^{(i)}} = (u_1,\cdots, u_{v,m},\cdots,u_{V,M_V})^T$$
        $$\bm{u^{(j)}} = (u_1,\cdots, u_{v,m}+1,\cdots,u_{V,M_V})^T,$$
and substituting (\ref{transition}) into (\ref{local}) yields:
        \begin{equation}\label{local2}
        \begin{aligned}
        & \Pr\left\{ u_1,\cdots, u_{v,m},\cdots,u_{V,M_V} \right\}  \lambda_v \\
        & = \Pr\left\{ u_1,\cdots, u_{v,m}+1,\cdots,u_{V,M_V} \right\}  f_v(u_{v,m} + 1).
        \end{aligned}
        \end{equation}
Clearly, this implies a recursive equation for computing the stationary distribution. Continuing the recursion down to $0$ for the $(\sum\limits_{w=1}^{v-1}M_w + m)$-th entry:
        \begin{equation}
        \begin{aligned}
        &\Pr\left\{ u_1,\cdots, u_{v,m}+1,\cdots,u_{V,M_V} \right\}\\
        &= \Pr\left\{ u_1,\cdots,0,\cdots,u_{V,M_V} \right\} \cdot \frac{\lambda_v^{u_m+1}}{\prod_{i=1}^{u_{v,m}+1} f_v(i)}.
        \end{aligned}
        \end{equation}
        Repeating the same process for other entries yields:
        \begin{equation}\label{static}
        \Pr\left\{ \bm{u} \right\} = P_0 \prod_{v=1}^{V}\prod_{m=1}^{M_v}\frac{\lambda_v^{u_{v,m}}}{\prod_{i=1}^{u_{v,m}} f_v(i)},
        \end{equation}
        in which
        \begin{equation}\label{static0}
\begin{aligned}
        P_0 & = \Pr\left\{ 0,\cdots,0,\cdots,0 \right\} \\
        & = \left( \sum_{\bm{u} \in \mathbb{U}}
        \prod_{v=1}^{V}\prod_{m=1}^{M_v}\frac{\lambda_v^{u_{v,m}}}{\prod_{i=1}^{u_{v,m}} f_v(i)}\right)^{-1}
\end{aligned}
        \end{equation}
is the probability of zero state and can be derived directly from the unity of probability distribution. As can be seen in (\ref{static}) and (\ref{static0}), the stationary distribution of any state $\bm{u}$ is proportional to the product of terms which can be solely determined by the entry values of $\bm{u}$.

\begin{Remark: Possion}
	Although we formulate our problem with the assumption of exponentially distributed service demand, it is worthy of noting that the above product-form stationary distribution also applies to other non-exponential service demand distributions. It is proved in \cite{kaufman81} that the product form distribution is valid for any service time distributions with rational Laplace transforms.
\end{Remark: Possion}

\section{Blocking Probability}\label{sec:pb}
\subsection{Brute-force Evaluation}
The admission control policy we have enforced on the {VBS} pool will cause session blockings. These blocking events can be classified into two classes: radio blockings (denoted by $B_r$) and computational blockings (denoted by $B_c$). Radio blocking is defined as the blocking events \emph{solely} due to insufficient r-servers, i.e. $U_{v,m}(t^{-}) = K, \text{and} \sum_{v=1}^{V}\sum_{m=1}^{M_v} U_{v,m}(t^{-}) < N,$
while computational blocking is defined as the blocking events due to insufficient c-servers \emph{regardless of r-servers}, i.e. $\sum_{v=1}^{V}\sum_{m=1}^{M_v} U_{v,m}(t^{-}) = N.$
Here $t^{-}$ means the epoch just prior to a session arrival.  Because we define radio blocking as the events that are solely due to insufficient r-servers, the blocking events that are due to simultaneously insufficient r-servers and c-servers are explicitly classified as computational blocking. It is worth noting that these doubly blocking events can be instead classified as radio blocking without affecting the overall blocking probability. But doing so will nevertheless result in less concise mathematical definition for both classes of events. Therefore, radio and computational blockings events are mutually exclusive, i.e. $B_r \cap B_c = \varnothing$. The union set of radio and computational blocking is further defined as the overall blocking $B = B_r \cup B_c$.

Next we derive the expression for the probability of radio and computational blockings. Since Poisson arrivals see time-averages ({PASTA}) \cite{PASTA}, the blocking probability can be evaluated from the stationary distribution we have just derived. Concretely, the radio blocking probability for sessions in class-$v$ {VBSs} is:
    
    \begin{align}
    P_v^{\mathrm{br}}
    &= \sum\limits_{m=1}^{M_{v}}\frac{1}{M_{v}} \sum\limits_{\bm{u} \in \mathbb{U}_{\mathrm{br}}^{v,m} }\Pr\left\{ \bm{u} \right\} \\
    &= P_0\sum\limits_{\bm{u} \in \mathbb{U}_{\mathrm{br}}^{v,1} }\prod_{w=1}^{V}\prod_{m=1}^{M_w}\frac{\lambda_w^{u_{w,m}}}{\prod_{i=1}^{u_{w,m}} f_w(i)} \label{pbr:4} \\
    &= P_0 \frac{\lambda_v^{K_v}}{\prod_{i=1}^{K_v} f_v(i)} \cdot \\
     &\sum\limits_{\bm{u} \in \mathbb{U}_{\mathrm{br}}^{v,1} }\left[(\prod_{w \neq v}\prod_{m=1}^{M_w}\frac{\lambda_w^{u_{w,m}}}{\prod_{i=1}^{u_{w,m}} f_w(i)})\cdot(\prod_{m=2}^{M_v}\frac{\lambda_v^{u_{v,m}}}{\prod_{i=1}^{u_{v,m}} f_v(i)})\right] \label{pbr},
    \end{align}
where $\mathbb{U}_{\mathrm{br}}^{v,m} = \{\bm{u} \mid u_{v,m} = K_v, u_{1,1}+\cdots+u_{1,M_1}+\cdots+u_{V,1}+\cdots+u_{V,M_V}<N\}$ and (\ref{pbr:4}) holds because (\ref{static}) is symmetric for entries with same values for index $v$, i.e.
	\begin{equation}
    \label{symm}
	\begin{aligned}
	    &\Pr\left\{ \cdots, u_{v,i},\cdots,u_{v,j},\cdots \right\}  \\ &= \Pr\left\{ \cdots, u_{v,j},\cdots,u_{v,i},\cdots \right\}.
	\end{aligned}
    \end{equation}
Similarly, the computational blocking probability is:
    \begin{equation}\label{pbc}
    \begin{aligned}
    P^{\mathrm{bc}}
    &= \sum_{\bm{u} \in \mathbb{U}_{\mathrm{bc}}^N}\Pr\left\{ \bm{u} \right\}\\
    &= P_0  \sum_{\bm{u} \in \mathbb{U}_{\mathrm{bc}}^N}\prod_{v=1}^{V}\prod_{m=1}^{M_v}\frac{\lambda_v^{u_{v,m}}}{\prod_{i=1}^{u_{v,m}} f_v(i)},
    \end{aligned}
    \end{equation}
where $\mathbb{U}_{\mathrm{bc}}^N = \{\bm{u} \mid u_{1,1}+\cdots+u_{1,M_1}+\cdots+u_{V,1}+\cdots+u_{V,M_V} =N, u_{v,m}\le K_v\}$. The overall blocking probability for class-$v$ VBSs can then be brute-force evaluated by summing up radio and computational blocking probability
    \begin{equation}\label{pbo}
        P_v^{\mathrm{b}} = \Pr\left\{ B \right\} = P_v^{\mathrm{br}}  + P^{\mathrm{bc}}.
    \end{equation}

\subsection{Recursive Evaluation} \label{recur}
Theoretically, the blocking probability under arbitrary system parameter can be calculated with brute-force evaluation. However the calculation process is exponentially hard and can become intractable when the pool size is extremely large. To reduce the computational complexity, we next give a recursive method for calculating the blocking probability. We will first introduce two auxiliary functions and re-express the blocking probability with respect to these functions. Then we will establish a recursive relationship to evaluate those two auxiliary functions and provide an analysis on the computational complexity of the proposed recursive evaluation method. These two auxiliary functions are defined as follows:
        \begin{equation}\label{cut}
        C(N,\bm{M}) = \sum\limits_{\bm{u} \in \mathbb{U}_{\mathrm{bc}}^N}\prod_{w=1}^{V}\prod_{m=1}^{M_w}\frac{\lambda_w^{u_{w,m}}}{\prod_{i=1}^{u_{w,m}} f_w(i)},
        \end{equation}
        \begin{equation}\label{remain}
        R(N,\bm{M}) = \sum\limits_{\bm{u} \in \left(\mathbb{U}_{\mathrm{bc}}^N\right)^C}\prod_{w=1}^{V}\prod_{m=1}^{M_w}\frac{\lambda_w^{u_{w,m}}}{\prod_{i=1}^{u_{w,m}} f_w(i)},
        \end{equation}
where $\bm{M} = (M_1,\cdots,M_v,\cdots,M_V)^T$ and $\left(\mathbb{U}_{\mathrm{bc}}^N\right)^C = \{\bm{u} \mid u_{1,1}+\cdots+u_{1,M_1}+\cdots+u_{V,1}+\cdots+u_{V,M_V}<N, u_{v,m} \le K_v\}$ is the complement set of set $\mathbb{U}_{\mathrm{bc}}^N$ in set  $\mathbb{U}$. Clearly, $C(N,\bm{M})$ and $R(N,\bm{M})$ are proportional to the sum of probability terms over $\mathbb{U}_{\mathrm{bc}}^N$ and $\left(\mathbb{U}_{\mathrm{bc}}^N\right)^C$, respectively. Therefore, the blocking probability (\ref{pbr}) and (\ref{pbc}) can be re-expressed as, 
        \begin{equation}
        \begin{aligned}
        &P_v^{\mathrm{br}}&= &P_0 \cdot \frac{\lambda_v^{K_v}}{\prod_{i=1}^{K_v} f_v(i)}   R(N-K_v,\bm{M}-\hat{\bm{e}}_{v}),\\
        &P^{\mathrm{bc}} &= &P_0 \cdot C(N,\bm{M}),\\
        &P_0 &= &R^{-1}(N + 1, \bm{M}),
        \end{aligned}
        \end{equation}       
where $\hat{\bm{e}}_v = (0,\cdots,0,\overbrace{1}^{v\text{-th}},0,\cdots,0)^T$ is a column vector of length $V$. From the definition of $C(N,\bm{M})$ and $ R(N,\bm{M})$, the following recursive relationships exist:
        \begin{equation}\label{rec_c}
        \begin{aligned}
        & C(N,\bm{M}) =\\
        & \begin{cases}
        \frac{\lambda_v^{N_2(v)}}{\prod_{i=1}^{N_2(v)} f_v(i)}, & \bm{M} = \hat{\bm{e}}_v\\
        \begin{aligned}
        \sum_{n=N_1(v)}^{N_2(v)}\frac{\lambda_v^{n}}{\prod_{i=1}^{n} f_v(i)}  C(N-n,\bm{M}-\hat{\bm{e}}_v), \end{aligned} & \bm{M} > \hat{\bm{e}}_v,
        \end{cases}
        \end{aligned}
        \end{equation}

        \begin{equation}\label{rec_r}
        \begin{aligned}
        & R(N,\bm{M}) \\
        & = \begin{cases}
        0, & N = 1 \\
        \begin{aligned} R(N+1,\bm{M})-C(N,\bm{M}),\end{aligned}   & 1 < N < \bm{M}^T\bm{K}+1\\
        \prod_{w=1}^{V} \left( \sum_{n=1}^{K_w}\frac{\lambda_w^{n}}{\prod_{i=1}^{n} f_w(i)} \right)^{M_w}, & N = \bm{M}^T\bm{K}+1,
        \end{cases}
        \end{aligned}
        \end{equation}
where $$N_1(v) = \max{\left[0,N-\sum_{w \neq v}M_wK_w-(M_v-1)K_v\right]},$$ $$N_2(v) = \min{(K_v,N)},$$ and $\bm{M}^T\bm{K} = \sum_{w=1}^{V}M_wK_w$. Following these recursive relationship, we can calculate the value $C(N,\bm{M})$ and $R(N,\bm{M})$ for arbitrary input through iterative calculation. Concretely, we can iterate for $C(N,\bm{M})$ from any  $\hat{\bm{e}_v}$ following (\ref{rec_c}). After that, we can reuse the calculated $C(N,\bm{M})$ values to calculate for $R(N,\bm{M})$ by iterating from either $N=0$ or $N=\bm{M}^T\bm{K}+1$ according to (\ref{rec_r}). Note that the comparison between $\bm{M}$ and $\hat{\bm{e}}$ in (\ref{rec_c}) is element-wise, and in this sense $\bm{M}$ is always greater or equal to $\hat{\bm{e}}$ in non-empty pools. With the above recursive relationship, the computational complexity of blocking probability can be reduced to at most the second power of the pool size, as stated in the following theorem.
\begin{complex}\label{Theo_complex}
The upper bound for the overall computational complexity of the proposed recursive method is:
\begin{equation}
C \le \left[(\max_v{K_v})^2+\max_v{K_v}\right]\cdot|\bm{M}|^2.
\end{equation}
\end{complex}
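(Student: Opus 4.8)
The plan is to bound the total number of arithmetic operations by accounting separately for the evaluations of the two auxiliary functions, and then to argue that the cost is dominated by the $C$-recursion (\ref{rec_c}), with the $R$-recursion (\ref{rec_r}) and the final assembly of $P_v^{\mathrm{br}}$, $P^{\mathrm{bc}}$ and $P_0$ contributing only lower-order terms. First I would fix an order in which the $|\bm{M}|$ VBSs are added, so that (\ref{rec_c}) is carried out as a sequence of stages: stage $j$ produces the table of values $C(n,\bm{M}_{(j)})$ for the sub-pool $\bm{M}_{(j)}$ consisting of the first $j$ VBSs, reusing the table produced at stage $j-1$. Since each application of (\ref{rec_c}) removes exactly one VBS, the recursion has depth $|\bm{M}|$, so there are at most $|\bm{M}|$ stages.

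Next I would bound the work inside one stage. The argument $N$ ranges over integers from $0$ up to the target value, and because computational blocking is vacuous once $N$ exceeds the total radio capacity we may assume $N \le \bm{M}^T\bm{K} = \sum_{w} M_w K_w \le |\bm{M}|\max_v K_v$; hence at most $|\bm{M}|\max_v K_v$ distinct values of $N$ are tabulated per stage. For each such value, (\ref{rec_c}) writes $C(N,\bm{M}_{(j)})$ as a sum of $N_2(v)-N_1(v)+1$ terms, and since $N_2(v)=\min(K_v,N)\le K_v$ while $N_1(v)\ge 0$, this sum has at most $\max_v K_v+1$ terms, each obtained by one multiplication of a precomputed coefficient $\lambda_v^{n}/\prod_{i=1}^{n}f_v(i)$ with an already-tabulated $C$-value and one addition. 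Multiplying the three factors, at most $|\bm{M}|$ stages, at most $|\bm{M}|\max_v K_v$ values of $N$ per stage, and at most $\max_v K_v+1$ operations per value, gives $|\bm{M}|^2\max_v K_v(\max_v K_v+1)=[(\max_v K_v)^2+\max_v K_v]\,|\bm{M}|^2$, which is the claimed bound.

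It then remains to check that nothing else exceeds this order. The $R$-recursion (\ref{rec_r}) holds $\bm{M}$ fixed and steps only in $N$, so it needs at most $\bm{M}^T\bm{K}+1$ evaluations, each a single subtraction of an already-computed $C$-value; and the radio-blocking terms $R(N-K_v,\bm{M}-\hat{\bm{e}}_v)$ feeding $P_v^{\mathrm{br}}$ reuse the $C$-tables of the sub-pools $\bm{M}-\hat{\bm{e}}_v$, which already arise among the stages of the main $C$-recursion once the removal order is chosen to uncover each class. All of these add only $O(|\bm{M}|\max_v K_v)$ operations, which is comfortably absorbed: the stage-by-stage count above is itself a loose over-estimate, since early stages tabulate far fewer than $|\bm{M}|\max_v K_v$ values of $N$, leaving ample slack below the stated bound.

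I expect the main obstacle to be the bookkeeping in the second paragraph, namely pinning down the number of distinct $(N,\bm{M}')$ sub-problems precisely enough that the product of the three factors lands on the stated coefficient rather than a larger one, and confirming that the lower-order contributions are genuinely dominated rather than accumulating an extra factor. A secondary subtlety is justifying that intermediate $C$-tables are shared across the computational-blocking and the several radio-blocking computations without recomputation, which is exactly what keeps the number of VBS classes $V$ out of the final complexity.
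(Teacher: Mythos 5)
Your count is correct and is essentially the paper's own argument: the paper's proof multiplies the same three factors — recursion depth at most $|\bm{M}|$, at most $\bm{M}^T\bm{K}\le|\bm{M}|\max_v{K_v}$ tabulated values of $N$ per sub-pool, and $O(\max_v{K_v})$ operations per value. The only (immaterial) difference is bookkeeping allocation: the paper charges $(\max_v{K_v})^2|\bm{M}|^2$ to the $C$-recursion and the remaining $(\max_v{K_v})|\bm{M}|^2$ to the $R$-recursion's one subtraction per value, whereas you extract the full coefficient from the $C$-recursion alone (via $\max_v{K_v}+1$ terms per value) and absorb the $R$-work and table-sharing into the slack of the over-count.
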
 
\begin{proof}
See Appendix \ref{Proof: Complex} for proof.
\end{proof}
        
\subsection{Large Pool Approximation}
The above quadratic computational complexity can also become intractable for very large pools. To overcome this inconvenience, next we present a closed-form approximation for the blocking probability with large {VBS} pools. Although the above recursive expression cannot lead us to a direct approximation, the product-form stationary distribution of $\bm{U}$ does facilitate an indirect one.

First define some auxiliary variables. Let $\tilde{U}_{w,m}$ be the number of sessions in the $m$-th class-$w$ {VBS} when $$N\ge\bm{M}^T\bm{K},$$ $$\mu_w = \mathrm{E}\left[\tilde{U}_{w,m}\right],$$ $$\sigma_w^2 = \mathrm{Var}\left[\tilde{U}_{w,m}\right]\footnote{It is obvious that $\tilde{U}_{w,m}$ are i.i.d random variables for $m = 1,2,\cdots,M_w$.}.$$ Also, let $\tilde{S}_{\bm{M}} = \frac{1}{|\bm{M}|}\sum_{w=1}^{V}\sum_{m=1}^{M_w} \tilde{U}_{w,m}$, and $\tilde{S}_{M_w}=\frac{1}{M_w}\sum_{m=1}^{M_w}\tilde{U}_{w,m}$. Using these notations, the large-pool approximation is stated in the following Theorem.

\begin{bpa}[Large Pool Blocking Probability]\label{Theo_bpa}
For $N > |\bm{M}|\mu$, the session blocking probability for class-$v$ {VBSs} is:
\begin{equation}\label{bpa}
\lim\limits_{|\bm{M}| \to \infty}P_v^{\mathrm{b}} = \frac{1}{\sqrt{2 \pi |\bm{M}|\sigma^2}} \frac{1}{e^{\alpha^2/2}-1} + \tilde{P}_v^{\mathrm{br}},
\end{equation}
where  $\mu =\sum_{w=1}^{V}\beta_w\mu_w$, $\sigma^2 = \sum_{w=1}^{V}\sigma_w^2$, and $\beta_w = \lim\limits_{|\bm{M}| \to \infty}\frac{M_w}{|\bm{M}|}$; $\alpha = \frac{N - |\bm{M}|\mu}{\sqrt{|\bm{M}|}\sigma}$ is the normalized number of c-servers;  $\tilde{P}_v^{\mathrm{br}}$ is the overall blocking probability in class-$v$ {VBSs} when $N > \bm{M}^T\bm{K}$.
\end{bpa}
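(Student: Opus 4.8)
The plan is to exploit the product-form law (\ref{static}) to recognize the $N$-constrained pool as an \emph{unconstrained} (radio-only) pool conditioned on a single scalar event, and then drive the heterogeneous occupancy sum through a local central limit theorem. The starting observation is that the stationary law of $\bm U$ under the c-server cap is exactly the law of the independent family $\{\tilde U_{w,m}\}$ conditioned on $\{\tilde S \le N\}$, where $\tilde S = \sum_{w,m}\tilde U_{w,m} = |\bm M|\tilde S_{\bm M}$. This is immediate from (\ref{static}): both measures assign weight proportional to $\prod_{w,m}\lambda_w^{u_{w,m}}/\prod_i f_w(i)$ on $\{u_{v,m}\le K_v\}$, and the c-server cap contributes only the extra event $\{\sum u\le N\}$. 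Under this identification the $\tilde U_{w,m}$ are independent, i.i.d.\ within each class, with the stated $\mu_w,\sigma_w^2$, and $\tilde S$ has mean $|\bm M|\mu$ and variance $|\bm M|\sigma^2$. Invoking PASTA and the class-symmetry (\ref{symm}) to evaluate the radio term on the representative VBS $m=1$, the two pieces of (\ref{pbc}) and (\ref{pbr}) become the conditional probabilities
\[
P^{\mathrm{bc}} = \frac{\Pr\{\tilde S = N\}}{\Pr\{\tilde S \le N\}}, \qquad
P_v^{\mathrm{br}} = \frac{\Pr\{\tilde U_{v,1}=K_v,\ \tilde S < N\}}{\Pr\{\tilde S \le N\}}.
\]

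Next I would evaluate the computational term by a \emph{local} limit theorem. Because the $\tilde U_{w,m}$ are bounded, non-degenerate, span-one lattice variables, the Lyapunov condition holds trivially and the lattice local CLT gives $\Pr\{\tilde S=N\}\sim (2\pi|\bm M|\sigma^2)^{-1/2}e^{-\alpha^2/2}$, with $\alpha$ as in the statement; this already supplies the prefactor and the Gaussian exponent of the first term in (\ref{bpa}). The normalization is then written as $\Pr\{\tilde S\le N\}=1-\Pr\{\tilde S> N\}$ and the upper tail estimated by $\Pr\{\tilde S> N\}\approx e^{-\alpha^2/2}$; dividing the point mass by $1-e^{-\alpha^2/2}$ converts the factor $e^{-\alpha^2/2}$ into $1/(e^{\alpha^2/2}-1)$ and yields the first term of (\ref{bpa}).

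For the radio term I would factor the numerator over the distinguished VBS. Conditioning on $\tilde U_{v,1}=K_v$, the event $\{\tilde S<N\}$ reduces to $\{\tilde S'\le N-1-K_v\}$ for the sum $\tilde S'$ over the remaining $|\bm M|-1$ VBSs. Since $\tilde S'$ has the same mean and variance as $\tilde S$ up to $O(1)$ corrections, its standardized threshold differs from $\alpha$ only by $O(|\bm M|^{-1/2})$, so $\Pr\{\tilde S'\le N-1-K_v\}$ and $\Pr\{\tilde S\le N\}$ share the same limit and cancel in the ratio. Hence $P_v^{\mathrm{br}}\to\Pr\{\tilde U_{v,1}=K_v\}$, which is exactly $\tilde P_v^{\mathrm{br}}$, the radio (hence overall) blocking probability of the unconstrained pool $N>\bm M^T\bm K$. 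Note this cancellation is robust: whatever estimate one adopts for $\Pr\{\tilde S\le N\}$, it appears identically in the numerator's conditioning. Summing the two limits gives (\ref{bpa}).

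The main obstacle is the normalization factor. A textbook CLT would replace $\Pr\{\tilde S\le N\}$ by $\Phi(\alpha)$ rather than $1-e^{-\alpha^2/2}$, and these coincide only to leading exponential order as $\alpha$ grows; the delicate step is therefore to justify the bare exponential tail estimate in the moderate/large-deviation regime $N>|\bm M|\mu$ relevant here (for instance through exponential tilting of the independent summands, whose tilted mean is pushed to $N$). A secondary difficulty is uniformity across the $V$ heterogeneous classes: the lattice local limit theorem, the tail estimate, and the $O(1)$ threshold shift in the radio term must all be controlled with error terms uniform in the class-proportions $\beta_w$ so that they can be combined inside a single ratio as $|\bm M|\to\infty$.
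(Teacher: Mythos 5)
Your proposal is correct and follows essentially the same route as the paper's Appendix C: the product-form identification of the constrained pool as the independent family conditioned on $\{\tilde S \le N\}$ (the paper's scaling relation between $P_0$ and $\tilde P_0$), the Gaussian point-mass approximation $\Pr\{\tilde S = N\}\approx \frac{1}{|\bm{M}|}\,q(N/|\bm{M}|)$, the tail estimate $e^{-\alpha^2/2}$ yielding the factor $1/(e^{\alpha^2/2}-1)$, and the factorization-plus-cancellation argument that leaves $\tilde P_v^{\mathrm{br}}$ as the radio term. The only difference is one of explicitness: you invoke the lattice local CLT by name and correctly flag as the delicate step the replacement of the true Gaussian tail $\bar\Phi(\alpha)\sim e^{-\alpha^2/2}/(\alpha\sqrt{2\pi})$ by the Chernoff-type estimate $e^{-\alpha^2/2}$, which the paper simply asserts via $\lim_{|\bm{M}|\to\infty}Q(N/|\bm{M}|)=e^{-\alpha^2/2}$ without comment.
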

\begin{proof}
	See Appendix \ref{Proof: bpa} for proof.
\end{proof}

\begin{Remark: Complexity} With the approximation in (\ref{bpa}), the blocking probability can be calculated in one shot as long as the first-order and second-order statistics of $\tilde{U}_{v,m}$ are known. We will see in Section \ref{sec:eg} that these statistics are rather easy to obtain in some very practical scenarios.
\end{Remark: Complexity}

\begin{Remark: Decomposition}
Under the large-pool assumption, the blocking probability in (\ref{bpa}) is decomposed into two terms. The first term $\frac{1}{\sqrt{2 \pi |\bm{M}| \sigma^2}} \frac{1}{e^{\alpha^2/2}-1}$ reflects the portion of blockings that are solely due to insufficient computational resource, while the second term $\tilde{P}_v^{br}$ reflects the portion that are solely due to insufficient radio resource. This result reveals the decoupling feature between radio and computational blockings in large {VBS} pools.
\end{Remark: Decomposition} 

\begin{Remark: InfiniteK}
Although we assume $K_v<\infty$ in our derivation, (\ref{bpa}) is still true when $K \to \infty$. In this case, the approximation used in (\ref{Q_approx}) may not hold anymore. But this will not cause any problem since the radio blocking probability $\tilde{P}_v^{\mathrm{b}}$ will be $0$ when we have infinite radio resource. This will force the second term of (\ref{bpa}) to become zero, canceling out the inconsistency in the above approximation.
\end{Remark: InfiniteK}

\section{Statistical Multiplexing Gain}\label{sec:gain}
Since stochastic user sessions from different VBSs are consolidated, it is natural to expect a reduction in the required amount of computational resource compared with non-pooling schemes due to statistical multiplexing. Next we provide a theoretical analysis for the statistical multiplexing gain. We first derive the asymptotic utilization ratio of computational resource in large {VBS} pools.
\begin{lpl}[Large Pool Utilization Ratio]\label{Theo_lpl}
When c-servers are sufficiently provisioned (i.e. the number of c-servers is greater or equal to the number of r-servers, or $N \ge \bm{M}^T\bm{K}$), the utilization ratio of computational resource converges almost surely to a constant number that is smaller than 1 as $|\bm{M}| \to \infty$:
\begin{equation}
\lim\limits_{|\bm{M}| \to \infty}\eta \triangleq \frac{\sum_{w=1}^{V}\sum_{m=1}^{M_w} \tilde{U}_{w,m}}{N}
\xrightarrow{\mathrm{a.s.}}  \frac{|\bm{M}|\mu}{N} < 1.
\end{equation}
\end{lpl}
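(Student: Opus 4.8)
The plan is to exploit the hypothesis $N \ge \bm{M}^T\bm{K}$, under which the computational constraint never binds and the product-form law (\ref{static}) collapses into a product of per-VBS marginals. Indeed, when $N \ge \bm{M}^T\bm{K}$, every state with $u_{v,m}\le K_v$ automatically satisfies $\sum_{v,m}u_{v,m}\le\bm{M}^T\bm{K}\le N$, so the admissible set $\mathbb{U}$ is the full box $\prod_{v,m}\{0,1,\dots,K_v\}$ and (\ref{static}) factorizes over $(v,m)$. Hence the $\tilde{U}_{w,m}$ are \emph{mutually independent}, uniformly bounded by $K_w\le\max_v K_v<\infty$, and i.i.d. across $m$ within each class $w$ with common mean $\mu_w$. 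This independence is the structural fact that turns the problem into a law-of-large-numbers statement.

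Next I would invoke a strong law of large numbers. Because there are only finitely many classes ($V<\infty$), the cleanest route is to apply the classical i.i.d. SLLN class by class: as $M_w\to\infty$, $\tilde{S}_{M_w}=\frac1{M_w}\sum_{m=1}^{M_w}\tilde{U}_{w,m}\xrightarrow{\mathrm{a.s.}}\mu_w$. Writing the total occupancy as $\sum_{w}\sum_m\tilde{U}_{w,m}=\sum_{w}M_w\tilde{S}_{M_w}$, the utilization ratio becomes $\eta=\frac{|\bm{M}|}{N}\sum_{w=1}^{V}\frac{M_w}{|\bm{M}|}\tilde{S}_{M_w}$. Letting $|\bm{M}|\to\infty$ and using $\frac{M_w}{|\bm{M}|}\to\beta_w$ together with $\tilde{S}_{M_w}\xrightarrow{\mathrm{a.s.}}\mu_w$, the finite sum converges almost surely to $\sum_w\beta_w\mu_w=\mu$, so that $\eta\xrightarrow{\mathrm{a.s.}}\frac{|\bm{M}|\mu}{N}$. (Classes with $\beta_w=0$ contribute negligibly and are absorbed into the error.)

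To obtain the strict bound $\frac{|\bm{M}|\mu}{N}<1$ I would show $\mu_w<K_w$ strictly. The marginal of $\tilde{U}_{w,m}$ is the stationary law of a truncated birth–death chain on $\{0,\dots,K_w\}$ with strictly positive birth rate $\lambda_w$ and death rates $f_w(i)$, so every state carries positive mass and the mean lies strictly below its maximum $K_w$. Combined with $N\ge\bm{M}^T\bm{K}$ this gives $\mu=\sum_w\beta_w\mu_w<\sum_w\beta_w K_w=\lim\frac{\bm{M}^T\bm{K}}{|\bm{M}|}\le\lim\frac{N}{|\bm{M}|}$, whence the limiting ratio is strictly below $1$.

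The main obstacle is conceptual rather than computational: both $\eta$ and its limit $\frac{|\bm{M}|\mu}{N}$ depend on $|\bm{M}|$ (and on $N$, which itself grows with the pool), so the claim is really that the \emph{difference} vanishes almost surely. Making the a.s. statement precise requires fixing a single probability space on which the enlarging pool is realized — most naturally by coupling the class-$w$ VBSs to one infinite i.i.d. sequence and taking $M_w\to\infty$ along it — and then verifying that the weights $M_w/|\bm{M}|$ remain controlled so that the finite-class combination of almost-sure limits is legitimate.
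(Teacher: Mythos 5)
Your proposal is correct and takes essentially the same route as the paper's proof: a strong law of large numbers for the normalized occupancy $\tilde{S}_{\bm{M}}$ (the paper invokes the SLLN directly on the aggregate, citing its normal-limit result (\ref{sm}), while you apply the i.i.d.\ SLLN class by class and recombine with the weights $M_w/|\bm{M}|\to\beta_w$ --- a minor repackaging, not a different method), followed by the identical strict-mean argument that $\tilde{U}_{w,m}\le K_w$ with $\Pr\{\tilde{U}_{w,m}<K_w\}>0$ forces $\mu_w<K_w$, which together with $N\ge\bm{M}^T\bm{K}$ yields $|\bm{M}|\mu/N<1$. Your extra care --- justifying mutual independence via the factorization of the product-form law when the computational constraint never binds, and realizing the growing pool on a single probability space so the almost-sure statement is well posed --- only makes explicit what the paper leaves implicit (its footnote asserting the $\tilde{U}_{w,m}$ are i.i.d.).
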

\begin{proof}
	See Appendix \ref{Proof: lpl} for proof.
\end{proof}
This theorem implies that there exist some ($1-\eta$) redundant computational resource when the {VBS} pool is large enough. Thus this limit can be seen as an the maximum portion of c-servers that one can turn down to save computational resource. The potential to further turn down c-servers can in turn be defined as the difference between current utilization ratio of c-servers and the large-pool limit $\eta$:

\begin{rpg}[Residual Pooling Gain]\label{rpg}
The residual pooling gain of a {VBS} pool is:
\begin{equation}
g_r \triangleq \frac{N}{\bm{M}^T\bm{K}} - \eta.
\end{equation}
\end{rpg}

Although some c-servers can be turned down due to the statistical multiplexing effect, the negative effect is that the overall blocking probability $P^{\mathrm{b}}$ will increase following (\ref{bpa}). Hence we have to trade {QoS} for the statistical multiplexing gain. This tradeoff will be favorable as long as the degradation in the {QoS} is not very significant. Using the results in Theorem \ref{Theo_bpa}, we can directly derive the following corollary to quantify such ``significance'' and approximate the gain of {VBS} pools of different sizes.

\begin{kpt}[Critical Tradeoff Point]\label{Theo_knee}
When $|\bm{M}| \to \infty$, the minimum computational resource $\alpha^*$ required to keep the overall blocking probability $P_v^{\mathrm{b}} \le \tilde{P}_v^{\mathrm{br}} + \delta$ ($\delta \approx 0$) for all $v$ is
\begin{equation}\label{knee}
\alpha^* = \sqrt{2\mathrm{ln}(\frac{1}{\sqrt{2 \pi |\bm{M}| \sigma^2 \delta^2}}+1)}.
\end{equation}
\end{kpt}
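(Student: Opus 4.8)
The plan is to substitute the large-pool blocking expression of Theorem~\ref{Theo_bpa} into the QoS constraint $P_v^{\mathrm{b}} \le \tilde{P}_v^{\mathrm{br}} + \delta$ and invert for the normalized resource $\alpha$. First I would use Theorem~\ref{Theo_bpa}, which gives, in the limit $|\bm{M}|\to\infty$,
\[
P_v^{\mathrm{b}} = \frac{1}{\sqrt{2\pi|\bm{M}|\sigma^2}}\,\frac{1}{e^{\alpha^2/2}-1} + \tilde{P}_v^{\mathrm{br}}.
\]
The radio term $\tilde{P}_v^{\mathrm{br}}$ appears on both sides of the constraint and cancels, leaving the purely computational inequality
\[
\frac{1}{\sqrt{2\pi|\bm{M}|\sigma^2}}\,\frac{1}{e^{\alpha^2/2}-1} \le \delta,
\]
which is \emph{independent of the class index} $v$. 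This disposes of the ``for all $v$'' quantifier immediately: any $\alpha$ satisfying this single inequality meets the target for every class simultaneously.

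Next I would translate ``minimum computational resource'' into ``minimum $\alpha$''. Because $\alpha = (N-|\bm{M}|\mu)/(\sqrt{|\bm{M}|}\sigma)$ is strictly increasing in the c-server count $N$, minimizing $N$ over the feasible set is equivalent to minimizing $\alpha$. In the regime $N > |\bm{M}|\mu$ (i.e.\ $\alpha>0$) required by Theorem~\ref{Theo_bpa}, the map $\alpha \mapsto 1/(e^{\alpha^2/2}-1)$ is strictly decreasing, so the left-hand side above is strictly decreasing in $\alpha$. Hence the feasible set is a half-line $\alpha \ge \alpha^*$ and the infimum is attained at equality:
\[
\frac{1}{\sqrt{2\pi|\bm{M}|\sigma^2}}\,\frac{1}{e^{(\alpha^*)^2/2}-1} = \delta.
\]

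Finally I would solve this algebraically: rearranging gives $e^{(\alpha^*)^2/2}-1 = 1/(\delta\sqrt{2\pi|\bm{M}|\sigma^2})$, absorbing $\delta$ under the radical as $\delta^2$, taking logarithms, and isolating $\alpha^*$ produces
\[
\alpha^* = \sqrt{2\ln\!\left(\frac{1}{\sqrt{2\pi|\bm{M}|\sigma^2\delta^2}}+1\right)},
\]
which is exactly (\ref{knee}). There is no real analytic obstacle here; the only step needing care is the justification that the constraint is \emph{binding} at the optimum --- that the minimizing $N$ is reached with equality rather than strict slack --- which follows from the strict monotonicity of the computational term in $\alpha$ established above, together with the cancellation of the $v$-dependent radio term that lets a single threshold $\alpha^*$ serve all classes.
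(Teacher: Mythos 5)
Your proposal is correct and is essentially the paper's own derivation: the paper states the corollary as following ``directly'' from Theorem~\ref{Theo_bpa}, i.e.\ substituting the large-pool expression for $P_v^{\mathrm{b}}$, cancelling the $v$-dependent radio term $\tilde{P}_v^{\mathrm{br}}$, and inverting the strictly decreasing computational term $1/(e^{\alpha^2/2}-1)$ at equality to obtain (\ref{knee}). Your added justifications --- that minimizing $N$ is equivalent to minimizing $\alpha$, and that strict monotonicity makes the constraint binding and class-independent --- are exactly the (unstated) steps the paper relies on, the latter matching its remark that the critical point is invariant of the class index $v$.
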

We will show later in Fig. \ref{allPbs}, that this critical tradeoff point is essentially the point where the blocking probability start to increase at a significantly higher speed. This type of points are often referred to ``knee points''.

The residual pooling gain at this critical tradeoff point is bounded as follows:
\begin{equation}\label{gr}
\begin{aligned}
g^*_r &= \frac{N - |\bm{M}|\mu}{\bm{M}^T\bm{K}} \\
&=  \sigma \frac{\alpha^*\sqrt{|\bm{M}|}}{\bm{M}^T\bm{K}} \in \frac{\alpha^*}{\sqrt{|\bm{M}|}}\cdot\left[\frac{\sigma}{\max_v{K_v}},\frac{\sigma}{\min_v{K_v}}\right],
\end{aligned}
\end{equation}
by which $g^*_r$ is roughly proportional to $\alpha^*/\sqrt{|\bm{M}|}$. Note $\alpha^*$ is also a function of the pool size $|\bm{M}|$, so $g^*_r$ is not necessarily proportional to $1/\sqrt{|\bm{M}|}$. Investigating two extreme cases will help to reveal the true relationship between $g^*_r$ and the pool size $|\bm{M}|$.

\emph{Extreme Case 1:} If $|\bm{M}|$ is not very large such that $\sqrt{2 \pi |\bm{M}| \sigma^2 \delta^2} \ll 1$ and $\sqrt{|\bm{M}|} \ll 1/\delta^2$, then $\alpha^*$ is approximately constant because:
        \begin{equation}\label{Msmall}
        \begin{aligned}
        \alpha^* 
        \approx & \sqrt{2\mathrm{ln}(\frac{1}{\sqrt{2 \pi |\bm{M}| \sigma^2 \delta^2}})} \\
        =&        \sqrt{\mathrm{ln}(\frac{1}{2 \pi \sigma^2}) + \mathrm{ln}(\frac{1}{\delta^2}) + \mathrm{ln}(\frac{1}{|\bm{M}|})}\\
        \approx & \sqrt{\mathrm{ln}(\frac{1}{2 \pi \sigma^2 \delta^2})}.
        \end{aligned}
        \end{equation}
In this case $g^*_r \propto |\bm{M}|^{-1/2}$, which decreases slowly with $|\bm{M}|$. Even so, considering the fact that the residual pooling gain is at most 1, we can still get considerable pooling gain with a small value of $|\bm{M}|$.

\emph{Extreme Case 2:}, if $|\bm{M}|$ is very large such that $\sqrt{2 \pi |\bm{M}| \sigma^2 \delta^2} \gg 1$, notice $\lim_{x \to 0} \mathrm{ln}(1+x) \approx x$:
\begin{equation}\label{Mlarge}
\begin{aligned}
\alpha^* \approx & \sqrt{2\frac{1}{\sqrt{2 \pi |\bm{M}| \sigma^2 \delta^2}}} \propto |\bm{M}|^{-1/4}.
\end{aligned}
\end{equation}
In this case $g^*_r \propto |\bm{M}|^{-3/4}$, which means that the decrease in the residual pooling gain will speed-up as $|\bm{M}|$ grows large.
\begin{Remark: HetDecompose}
As can be seen in (\ref{knee}), the critical point is invariant of the {VBS} class index $v$. This implies that the {VBS} heterogeneity is decomposed in large {VBS} pools. The reason for this phenomenon may be that in large {VBS} pools, the absolute number of c-servers is large. Therefore, different class of {VBSs} may tend to interfere less with each other.
\end{Remark: HetDecompose}

\section{Example Scenarios} \label{sec:eg}
In this section, we will apply the results derived so far to two specific scenarios: real-time and delay-tolerant traffic. For each scenario, we will first explain how they can be mapped to our model and then we will perform necessary derivations. Note in real-life systems, both type of traffic may exist at the same time. In such a case, the following analysis can still be applied if the available resources are divided to serve these two type of traffic separately.
\subsection{Real-time Traffic}
For real-time traffic such as voice calls, active sessions will constantly bring in signal processing workload. Therefore, dedicated r-servers and c-servers need to be provisioned upon admission to guarantee the {QoS} of active sessions. The service capacity in this scenario equals the temporal duration of sessions, which is not affected by the scheduling policy of {VBS} pools once the sessions are accepted. As a result, the departure rate function can be simplified as $ f_v(i) = i\mu_0$. The {QoS} target in this case is to keep the overall session blocking probability for class-$v$ {VBSs} under a certain small threshold $P_v^{\mathrm{bth}} \approx 0$. Obviously, the session dynamics in different {VBSs} are mutually independent when computational resource are sufficiently provisioned ($N>\bm{M}^T\bm{K}$). Therefore the radio blocking probability $\tilde{P}_v^{\mathrm{br}}$ can be calculated as 
\begin{equation}\label{pb}
\tilde{P}_v^{\mathrm{br}} = \frac{a_v^{K_v}}{K_v!}\left(\sum\limits_{i=0}^{K_v}\frac{a_v^i}{i!}\right)^{-1} \le  P_v^{\mathrm{bth}} \approx 0,
\end{equation}
where $a_v = \lambda_v/\mu_v$. Then the first-order and second-order statistics of $\tilde{U}_{v,m}$ can be approximated as follows:
    \begin{equation}\label{rt_mu}
    \begin{aligned}
    \mathrm{E}\left[\tilde{U}_{v,m}\right]
    =& \frac{\sum\limits_{i=0}^{K_v} i  \frac{a_v^i}{i!}}{\sum\limits_{i=0}^{K_v}\frac{a_v^i}{i!}}
    = \frac{a \sum\limits_{i=0}^{K_v-1} \frac{a_v^i}{i!}}{\sum\limits_{i=0}^{K_v}\frac{a_v^i}{i!}}\\
    =&a_v \left( 1- \frac{a_v^{K_v}}{K_v!}\left( \sum\limits_{i=0}^{K_v}\frac{a_v^i}{i!}\right)^{-1}\right)
    \approx a_v,
    \end{aligned}
    \end{equation}
    
    \begin{equation} 
    \begin{aligned}
    \mathrm{E}\left[\tilde{U}_{v,m}^2\right]
    &=  \frac{\sum\limits_{i=0}^{K_v} i^2  \frac{a_v^i}{i!}}{\sum\limits_{i=0}^{K_v}\frac{a_v^i}{i!}}    = \frac{a_v \sum\limits_{i=0}^{K_v-1} (i+1)  \frac{a_v^i}{i!}}{\sum\limits_{i=0}^{K_v}\frac{a_v^i}{i!}}\\
    &= \frac{a_v  (\sum\limits_{i=0}^{K_v-1}\frac{a_v^i}{i!} + a_v \sum\limits_{i=0}^{K_v-2} \frac{a_v^i}{i!})}{\sum\limits_{i=0}^{K_v}\frac{a_v^i}{i!}} \approx a_v + a_v^2.\\
    \end{aligned}
    \end{equation}     
Then 
\begin{equation}
\mu_v \approx a_v,
\end{equation}
\begin{equation}
\sigma_v^2 \approx a_v.
\end{equation}

\subsection{Delay-tolerant traffic}
For delay-tolerant traffic such as packet data, the pool scheduler can opportunistically divide the total service capacity among active sessions. Here we assume constant service capacity rate $ f_v(i) = \mu_v$ for class-$v$ {VBSs} and \emph{Proportional Fairness} scheduling algorithm which effectively divides the total service capacity equally among active sessions. Although this assumption manifests a processor sharing model, it is essentially equivalent to a Markovian queueing model with the same $\lambda_v$ and $\mu_v$. Note because sessions would require certain amount of radio resources for signaling, new sessions would be rejected if there're no more signaling radio channels regardless of the data channels left. Therefore even the sessions can wait they still cannot be admitted into the system. If the rejected session decides to wait and retry, it will be considered as a new session. What’s more, many delay-tolerant traffic or elastic traffic still have a minimum rate requirement, which also limits the number of sessions that can be simultaneously served by the system. 
    
To derive the statistics in this scenario, first let $a_v = \lambda_v / \mu_v$ be the traffic load of the {VBSs} and define the following auxiliary function $A(a,K)$:
        \begin{equation}\label{A}
        \begin{aligned}
        A(a,K) &= \sum\limits_{i=0}^{K} a^i = \frac{1-a^{K+1}}{1-a},\\
        A'_a(a,K) &= \left(\sum\limits_{i=0}^{K} a^i\right)_a' = \sum\limits_{i=1}^{K} i a^{i-1} \\
        &= \frac{1-(K+1) a^K+K a^{K+1}}{(1-a)^2},\\
        A''_a(a,K) &= \left(\sum\limits_{i=0}^{K} a^i\right)_a'' = \sum\limits_{i=2}^{K} i(i-1) a^{i-2}.
        \end{aligned}
        \end{equation}
With these definitions, the average and covariance of $\tilde{U}_{v,m}$ can be expressed as:
        \begin{equation}\label{ps_mu}
        \begin{aligned}
        \mathrm{E}\left[\tilde{U}_{v,m}\right]
        &= \frac{\sum\limits_{i=1}^{K_v} i a_v^i}{\sum\limits_{i=0}^{K_v} a_v^i} = \frac{a_v A_a'(a_v,K_v)}{A(a_v,K_v)},
        \end{aligned}
        \end{equation}

        \begin{equation}\label{ps_sigma}
        \begin{aligned}
        \mathrm{E}\left[\tilde{U}_{v,m}^2\right]
        &= \frac{\sum\limits_{i=1}^{K_v} i^2 a_v^i}{\sum\limits_{i=0}^{K_v} a_v^i} = \frac{\sum\limits_{i=1}^{K_v} i a_v^i + \sum\limits_{i=2}^{K_v} i(i-1) a_v^i}{\sum\limits_{i=0}^{K_v} a_v^i}\\
        &= \frac{a_vA_a'(a_v,K_v) + a_v^2 A_a''(a_v,K_v) }{A(a_v,K_v)}.
        \end{aligned}
        \end{equation}
Again when computational resource are sufficiently provisioned ($N>\bm{M}^T\bm{K}$), we have
    \begin{equation}\label{ps blk}
        \tilde{P}_v^{\mathrm{br}} = \frac{a_v^k}{\sum\limits_{i=0}^{K_v} a_v^i} = \frac{a_v^k}{A(a_v,K_v)}.
    \end{equation}
Although these formulas are already enough for us to evaluate the performance of a {VBS} pool, the evaluation is nevertheless quite cumbersome. To simplify these formulas, we further assume that $K_v$ for all $v$ are large enough such that $K_v^2 a_v^{K_v} \to 0$.\footnote{This assumption is realistic because $a_v^{K_v}$ will diminish exponentially when $a_v<1$. Thus $K_v^2 a_v^{K_v}$ will be driven to $0$ for large enough $K_v$.} In this regime, 
	    \begin{equation}\label{Aapprox}
	    \begin{aligned}
	    A(a,K) &\approx \frac{1}{1-a}, \\
	    A_a'(a,K) &\approx \frac{1}{(1-a)^2},\\
	    A_a''(a,K) &\approx \frac{2}{(1-a)^3}.    
	    \end{aligned}
	    \end{equation}
Using (\ref{Aapprox}), (\ref{ps_mu}) and (\ref{ps_sigma}) can be simplified as
		\begin{equation}\label{ps_mu_approx}
		\begin{aligned}
		\mathrm{E}\left[\tilde{U}_{v,m}\right] \approx \frac{a_v}{1-a_v},
		\end{aligned}
		\end{equation}
		
		\begin{equation}\label{ps_sigma_approx}
		\begin{aligned}
		\mathrm{E}\left[\tilde{U}_{v,m}^2\right] \approx \frac{a_v}{1-a_v} + \frac{2a_v^2}{(1-a_v)^2}.
		\end{aligned}
		\end{equation}
Thus 
\begin{equation}
\mu_v \approx \frac{a_v}{1-a_v},
\end{equation}
\begin{equation}
\sigma_v^2 \approx \frac{a_v}{1-a_v} + \frac{a_v^2}{(1-a_v)^2}.
\end{equation}

\section{Numerical Results} \label{sec:results}
In this section, we will use the recursive method to numerically evaluate the blocking probability and compare them with the large-pool approximations.
\subsection{Basic Characteristics}
Fig. \ref{allPbs} shows the exact and large-pool-approximated blocking probability of a {VBS} pool under real-time traffic and different number of c-servers, and Fig. \ref{allPbs_delay} shows the same metrics for a {VBS} pool under delay-tolerant traffic. Note x-axis is normalized by the number of c-servers required without pooling to show the relative pooling gain. As can be seen, the trend in both figures are similar. This coincide with our large-pool approximation results that the blocking probability are affected only by the first- and second-order statistics of the number of sessions in the {VBS} pool. For this reason, we will only present results for real-time traffic from now on, and the conclusions we draw should apply to the delay-tolerant case as well. 

We can observe some basic blocking characteristic from these two figures: 1) when the number of c-servers is sufficient, the computational blocking probability $P^{\mathrm{bc}}$ is very small and below the scope of this figure; the overall blocking probability is dominated by radio blocking probability $P^{\mathrm{br}}$, which is around the desired threshold $P^{\mathrm{bth}}$. 2) As the number of c-servers decreases from its largest value $\bm{M}^T\bm{K}$, the computational blocking probability increases rapidly while the radio blocking probability start to decrease slightly; the net result of these two trends is a plateau before the critical tradeoff point (``knee point'') and a significant increase after it. 3) If the number of c-servers is to further decrease, the overall blocking probability will be dominated by the computational blocking probability and saturates at probability $1$. The radio blocking probability will decrease rapidly and its influence on overall blocking probability will diminish. Fig. \ref{allPbs} and Fig. \ref{allPbs_delay} also show the approximated blocking probability. The fact that the ``knee point'' configuration saved more than $20\%$ computational resources with a penalty of only $10^{-4}$ increase in the blocking probability demonstrates the benefit of statistical multiplexing. the As expected, these approximations are coherent with the exact values.

    \begin{figure}[!t]
    \centering
    \includegraphics[width=3.3in]{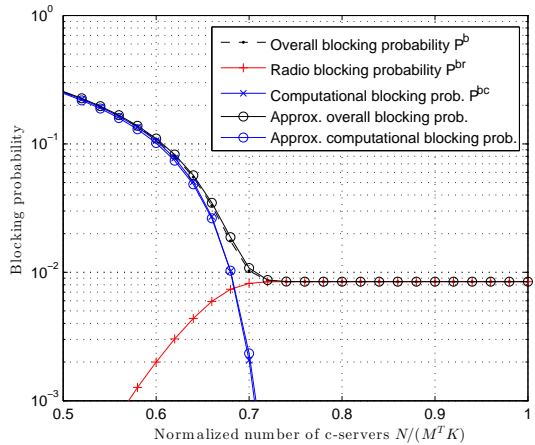}
    \caption{Blocking probability of a homogeneous VBS pool as a function of normalized number of c-servers ($N/\bm{M}^{T}\bm{K}$) under real-time traffic. Black box indicates the knee point. Simulation parameters: $M_1=40$, $a_1=20$, $P_1^{\mathrm{bth}}=10^{-2}$, $K_1=30$.}
    \label{allPbs}
    \end{figure}

    \begin{figure}[!t]
    \centering
    \includegraphics[width=3.5in]{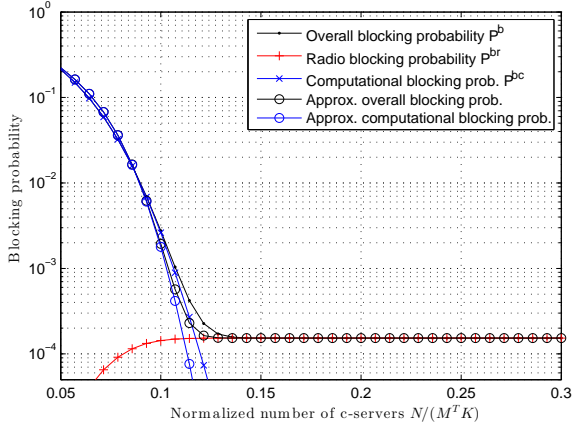}
    \caption{Blocking probability of a homogeneous VBS pool as a function of normalized number of c-servers ($N/\bm{M}^{T}\bm{K}$) under delay-tolerant traffic. Simulation parameters: $M_1=100$, $a_1=0.5$, $P_1^{\mathrm{bth}}=5\times10^{-4}$, $K_1=10$.}
    \label{allPbs_delay}
    \end{figure}

\subsection{Heterogeneous VBSs}
In Fig. \ref{ExactPbM2020}, we show the blocking probability of a {VBS} pool with two class of {VBSs}. The two classes have the same number of {VBSs} and the same traffic load, but the {QoS}, and thus the number of provisioned r-servers, is different. From the curves we can observe similar basic blocking characteristics as in the single class case. Also, we can see that the overall blocking probability for the two classes are different: they are respectively close to their threshold blocking probability when c-servers are sufficient since the overall blocking probability are dominated by the radio blocking, and converges to the same curve when c-servers become insufficient because the computational blocking probability begins to overwhelm. 
    \begin{figure}[!t]
    \centering
    \includegraphics[width=3.5in]{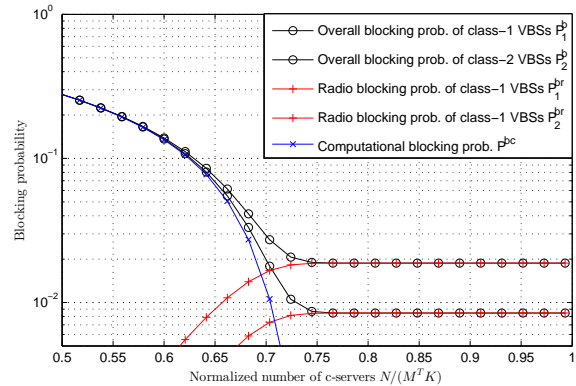}
    \caption{Blocking probability of a heterogeneous VBS pool as a function of normalized number of c-servers ($N/\bm{M}^{T}\bm{K}$) under real-time traffic. Simulation parameters: $\bm{M}=[20,20]^{T}$, $\bm{a}=[20,20]^{T}$, $\bm{P}^{\mathrm{bth}}=[1,2]^{T}\times10^{-2}$, $\bm{K}=[30,28]^{T}$.}
    \label{ExactPbM2020}
    \end{figure}

\subsection{Influence of Traffic Load and QoS Target}
In Fig. \ref{PbMultiak}, we illustrate the influence of different traffic load and {QoS} target. As can be seen, the desired level of {QoS} ($P^{\mathrm{bth}}$) determines the minimum blocking probability(i.e. height of the ``plateau'' to the right of the figure); while the traffic load $a$ determines the position of the ``knee point'' and how fast the blocking probability saturates to $1$. 
    \begin{figure}[!t]
    \centering
    \includegraphics[width=3.5in]{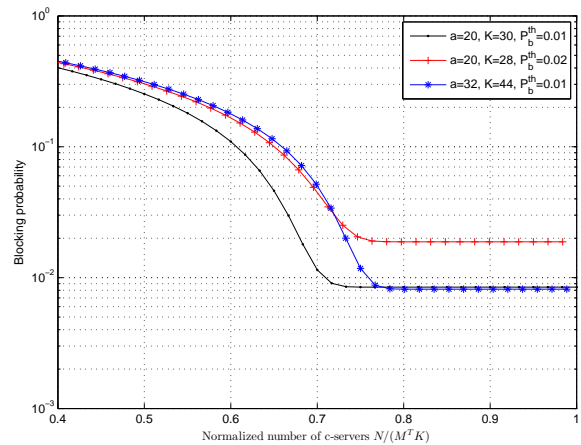}
    \caption{Blocking probability of a homogeneous VBS pool as a function of normalized number of c-servers ($N/\bm{M}^{T}\bm{K}$) under real-time traffic with different traffic load and {QoS} guarantees. Pool size $M=40$.}
    \label{PbMultiak}
    \end{figure}    

\subsection{Statistical Multiplexing Gain}
Most importantly, we can quantify the statistical multiplexing gain of the simulated {VBS} pool with the equations derived previously. In Fig. \ref{GainThreeFig}, we compare the overall blocking probability of three {VBS} pools under varying pool size. The ``knee point'' position and large-pool limit are also marked out with vertical lines. As the pool size increases, the blocking probability curve (and so does the ``knee point'') is pushed to the left. But the closer the ``knee point'' is to the large-pool limit, the slower the remaining distance decreases with the pool size $|\bm{M}|$. This indicates a decreasing marginal statistical multiplexing gain. Comparing the curves, we can find that the traffic load and the desired level of {QoS} have influence on the blocking probability and the statistical multiplexing gain. 

To better investigate this influence, we show the knee point position versus varying pool size in Fig. \ref{GainThreeFig_2}. Firstly we can find that a medium sized {VBS} pool can readily obtain considerable statistical multiplexing gain and the marginal gain diminishes fast. Thus a huge number of {VBSs} is needed so that the pooling gain can approach the large-pool limit. These observations imply that a {C-RAN} formed with multiple medium sized {VBS} pools can obtain almost the same pooling gain as the one formed with a single huge pool. If we further take the expenditure of fronthaul network into consideration, the former choice may be far more economical than the latter one.

By contrasting the left and middle curves, we can see that stricter {QoS} requirements can increase the pooling gain. This is because on one hand, we need to increase the number of r-servers $\bm{K}$ in order to reduce the blocking probability, which will in turn  increase $\bm{M}^T\bm{K}$; on the other hand, the average number of c-servers occupied is always around $|\bm{M}|\mu$. Therefore the stricter the {QoS}, the more idle c-servers there will be in the {VBS} pool and consequently the more the pooling gain. Also, we can see that the increase in traffic load will reduce the pooling gain by pushing the ``knee point'' to larger values. This observation indicates that, we may need to dynamically adjust the size\footnote{This can be achieved by dynamically changing the switching configuration of fronthaul so that the traffic of {VBSs} can be sent to a data center of the desired size.} of {VBS} pools in order to get a satisfactory pooling gain under fluctuating traffic load.

    \begin{figure*}[!t]
    \centering
    \includegraphics[width=6in]{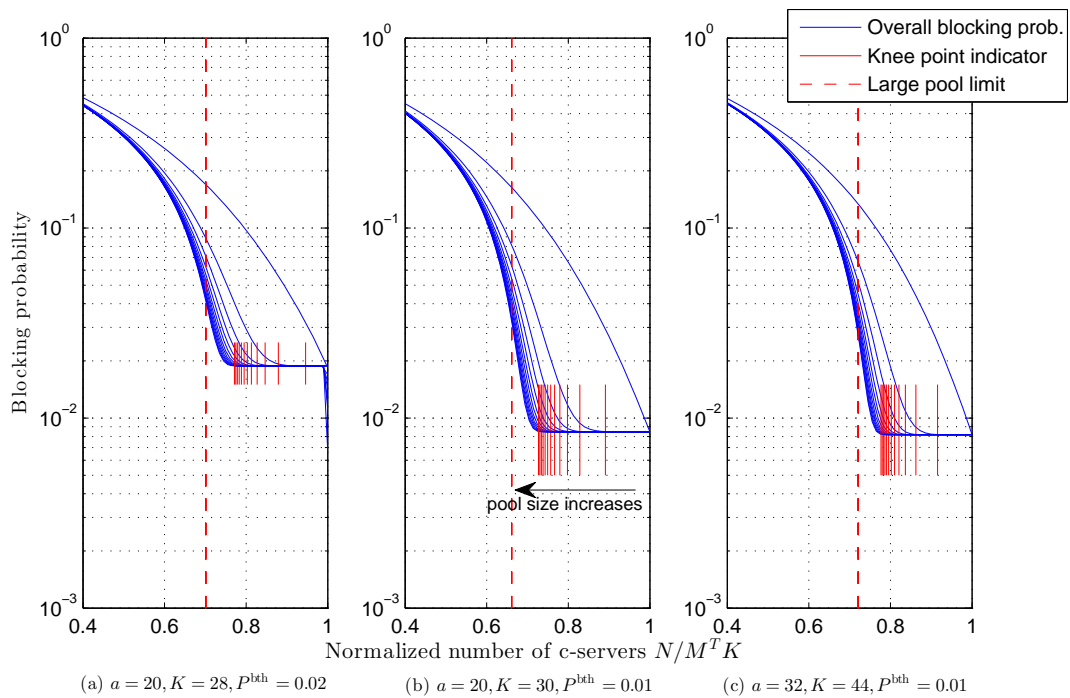}
    \caption{Blocking probability in homogeneous VBS pools as a function of normalized number of c-servers under real-time traffic and different pool size. Red vertical line indicates the knee point position. Red dashed line indicates the large-pool limit. Curves to the left correspond to larger pool size.}
    \label{GainThreeFig}
    \end{figure*} 
    
    \begin{figure*}[!t]
    \centering
    \includegraphics[width=6in]{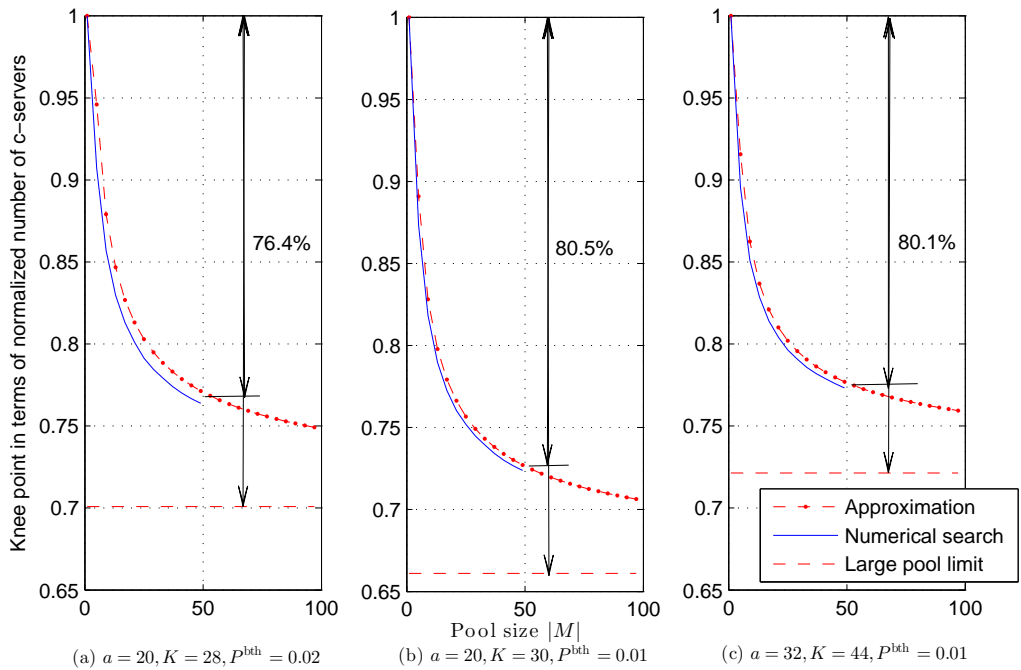}
    \caption{Knee point position as a function of pool size. Knee point position is measured in terms of the number of c-servers required normalized by the number of c-servers needed without pooling. Red dotted line shows the knee point values derived from large-pool approximation, whereas blue line shows the values directly searched from numerical results. Red dashed lines represent the large-pool limit $\eta$ as pool size approaches infinity. The percentage of the pooling gain at $50$ VBSs with respect to the maximum possible gain ($1-\eta$) is also noted in the figure.}
    \label{GainThreeFig_2}
    \end{figure*}

\section{Conclusion}
\label{sec:con}
In this article, we proposed a multi-dimensional Markov model for {VBS} pools to analyze their statistical multiplexing gain. We showed that the proposed model have a product-form expression for the stationary distribution. We derived a recursive method for calculating the blocking probability of a {VBS} pool, and gave closed-form approximation when the pool is large enough. Based on these results, we derived the expressions for the statistical multiplexing gains and applied them to both real-time and delay-tolerant traffic. Numerical results reveal that 1) the pooling gain reaches a significant level even with medium pool size (more than $75\%$ of the pooling gain can be achieved with around $50$ VBSs); 2) the marginal gain of larger pool size tend to be negligible; 3) lighter traffic load and tighter {QoS} level can increase the pooling gain.

Our model can be extended in several aspects to accommodate for more general scenarios. Firstly, we assume that user sessions occupy equal and fixed amounts of radio and computational resources. Yet realistic resource scheduling algorithm may allocate different amount of resources for each individual session. To accommodate such cases, our model need to be further relaxed to allow state transitions among non-neighboring states. Secondly, we assume sessions are only attached to one cell of the system. Nevertheless, coordinated-multipoint (CoMP) transmission/reception may introduce sessions that simultaneously consume radio resources from multiple cells. This means the admission control of CoMP session is hinged upon the available radio resources in all its serving cells, which can be accounted for by introducing more comprehensive admission controls in the model. Thirdly, we assume session arrival and service to be Poission. However, many emerging types of multi-media traffic is found to exhibit certain burstiness. The influence of burstiness can be investigated by assuming more general stochastic traffic and service models, e.g Interupted Poisson Process \cite{wu13}. Fourthly, we investigated real-time and delay-tolerant traffic separately whereas they are likely to coexist in real system. To evaluate such heterogeneous traffic, our model need to be extended to account for the different resource usage patterns of the two session types. Last but not the least, our resource reservation model may not be the most efficient possible. For example, unused service capacity can be further shared to increase statistical multiplexing gain. Regarding this, our model need to be further refined to support more general admission control and service strategies.

\appendices
\section{Reversibility of Proposed Model} \label{sec:rev} 
To proof that $\bm{U}(t)$ is reversible, we first give the Kolmogorov's Criterion of Reversibility.
	\begin{kcr}[Kolmogorov's Criterion]
	A continuous-time Markov chain is reversible if and only if its transition rates satisfy
	    \begin{eqnarray}
	    & q_{\bm{u^{(1)}} \bm{u^{(2)}}} q_{\bm{u^{(2)}} \bm{u^{(3)}}} \cdots q_{\bm{u^{(n-1)}} \bm{u^{(n)}}} q_{\bm{u^{(n)}} \bm{u^{(1)}}} \label{kcrt1} \\
	    =& q_{\bm{u^{(1)}} \bm{u^{(n)}}} q_{\bm{u^{(n)}} \bm{u^{(n-1)}}}
	    \cdots q_{\bm{u^{(3)}} \bm{u^{(2)}}} q_{\bm{u^{(2)}} \bm{u^{(1)}}} \label{kcrt2}
	    \end{eqnarray}
	for all finite sequences of states $\bm{u^{(1)}},\bm{u^{(2)}},\cdots,\bm{u^{(n)}} \in \mathbb{U}$.
	\end{kcr}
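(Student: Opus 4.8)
The plan is to prove the two directions of the equivalence separately, as the necessity and sufficiency of the cyclic condition have quite different flavors. I would begin with the easy direction, reversibility $\Rightarrow$ (\ref{kcrt1})=(\ref{kcrt2}). Assuming the chain is reversible with stationary distribution $\Pr\{\cdot\}$, the detailed balance relation $\Pr\{\bm{u}\}\,q_{\bm{u}\bm{u}'} = \Pr\{\bm{u}'\}\,q_{\bm{u}'\bm{u}}$ holds for every pair of neighboring states. I would write this relation for each consecutive pair around the cycle $\bm{u^{(1)}} \to \bm{u^{(2)}} \to \cdots \to \bm{u^{(n)}} \to \bm{u^{(1)}}$ and multiply all $n$ equations together. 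Since each stationary probability $\Pr\{\bm{u^{(k)}}\}$ then appears exactly once as a factor on each side of the product, they cancel identically, leaving precisely the equality of forward and backward rate products in (\ref{kcrt1})=(\ref{kcrt2}).

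The reverse direction, (\ref{kcrt1})=(\ref{kcrt2}) $\Rightarrow$ reversibility, carries the real content, and I would prove it by explicitly constructing a candidate distribution and verifying detailed balance. Fix a reference state $\bm{u^{(0)}}$ and, for any target state $\bm{u}$, select a path $\bm{u^{(0)}}=\bm{w^{(0)}} \to \bm{w^{(1)}} \to \cdots \to \bm{w^{(\ell)}} = \bm{u}$ of positive-rate transitions, which exists by irreducibility on $\mathbb{U}$. Define
\[
\Pr\{\bm{u}\} = \Pr\{\bm{u^{(0)}}\} \prod_{k=0}^{\ell-1} \frac{q_{\bm{w^{(k)}}\bm{w^{(k+1)}}}}{q_{\bm{w^{(k+1)}}\bm{w^{(k)}}}}.
\]
Once this quantity is shown to be independent of the chosen path, it immediately gives $\Pr\{\bm{u}'\} = \Pr\{\bm{u}\}\,q_{\bm{u}\bm{u}'}/q_{\bm{u}'\bm{u}}$ across every edge, which is exactly detailed balance; stationarity and hence reversibility then follow. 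The free constant $\Pr\{\bm{u^{(0)}}\}$ is pinned down afterward by normalizing over the finite set $\mathbb{U}$.

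The main obstacle is proving that $\Pr\{\bm{u}\}$ does not depend on the path used to reach $\bm{u}$. I would handle this by taking any two positive-rate paths from $\bm{u^{(0)}}$ to $\bm{u}$, traversing the first forward and the second in reverse so as to form a closed walk returning to $\bm{u^{(0)}}$. Because this closed walk is itself a finite sequence of states, the criterion (\ref{kcrt1})=(\ref{kcrt2}) applies to it directly; equating the forward and backward rate products and regrouping the factors shows that the two candidate path-products for $\Pr\{\bm{u}\}$ coincide. The technical care needed here is twofold: the ratios in the construction are well-defined only if $q_{\bm{u}\bm{u}'}>0 \iff q_{\bm{u}'\bm{u}}>0$, which holds for the birth-death structure (\ref{transition}) where arrivals and departures pair up, and the construction reaches every state only under irreducibility on $\mathbb{U}$. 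Both properties hold for the model at hand, so the argument goes through and the chain is reversible.
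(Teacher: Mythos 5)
Your proof is correct, but be aware that the paper never proves this statement at all: Kolmogorov's Criterion is quoted as a known classical result, and the appendix argument that follows it is a proof of Theorem~\ref{Theo_rev}, i.e.\ a \emph{verification} that the specific birth-and-death chain $\bm{U}(t)$ defined by (\ref{stateset}) and (\ref{transition}) satisfies the cycle condition (by matching each up-transition count $n_{v,m,u_{v,m}}^{+}$ with the down-transition count $n_{v,m,u_{v,m}}^{-}$ around a closed loop). Your proposal supplies the general equivalence that the paper treats as a black box, and it is the standard route: necessity by writing detailed balance along each edge of the cycle, multiplying the $n$ relations, and cancelling the stationary probabilities; sufficiency by the path-product construction of a candidate distribution, with path-independence obtained by closing two paths into a loop and invoking the cycle condition, after which detailed balance holds edge-by-edge, stationarity follows by summing, and normalization is possible because $\mathbb{U}$ is finite. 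Two refinements would tighten it. First, in the necessity direction, state explicitly that irreducibility forces $\Pr\{\bm{u^{(k)}}\}>0$ for every state on the cycle, so the cancellation is legitimate. Second, the positivity symmetry $q_{\bm{u}\bm{u}'}>0 \Leftrightarrow q_{\bm{u}'\bm{u}}>0$ need not be imported from the birth--death structure of (\ref{transition}): it already follows from the cycle condition together with irreducibility (if $q_{\bm{u}\bm{u}'}>0$ but $q_{\bm{u}'\bm{u}}=0$, close a positive-rate path from $\bm{u}'$ back to $\bm{u}$ into a cycle; the forward product is positive while the reverse product contains the factor $q_{\bm{u}'\bm{u}}=0$, contradicting the criterion), so your argument is in fact fully general and not tied to this model. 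In terms of what each approach buys: yours makes the appendix self-contained at the cost of length, while the paper's choice to cite the criterion and spend its effort on the model-specific loop-counting is where its actual contribution lies --- the two are complementary rather than competing.
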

We next verify that the Kolmogorov's Criterion is satisfied by $\bm{U}(t)$ for any finite sequences of states $\bm{u^{(1)}},\bm{u^{(2)}},\cdots,\bm{u^{(n)}} \in \mathbb{U}$. 

\emph{Case 1}. If (\ref{kcrt1}) (or (\ref{kcrt2})) equals to $0$, then there must be at least one product term being $0$ in (\ref{kcrt1}) (or (\ref{kcrt2})). Assuming this term to be $q_{\bm{u^{(i)}} \bm{u^{(i+1)}}}$, then according to (\ref{transition}), the term $q_{\bm{u^{(i+1)}} \bm{u^{(i)}}}$ in (\ref{kcrt2}) (or (\ref{kcrt1})) must also be $0$. Thus (\ref{kcrt2}) (or (\ref{kcrt1})) must also equal to $0$.

\emph{Case 2}.  If neither (\ref{kcrt1}) nor (\ref{kcrt2}) is $0$, then none of the terms in (\ref{kcrt1}) and (\ref{kcrt2}) equals $0$. According to (\ref{transition}), the transition resulting in the term $q_{\bm{u^{(i)}} \bm{u^{(i+1)}}}$ must be between neighboring states, i.e.
    $\bm{u^{(i+1)}} - \bm{u^{(i)}} = (0,\cdots,0,\pm 1,0,\cdots,0)^T.$

Notice that a transition such that the $m$-th entry of a state is changed from $u_{m}$ to $u_{m}+1$ will produce a product term $\lambda$ in (\ref{kcrt1}) and a product term $ f(u_m+1)$ in (\ref{kcrt2}). And reversely, a transition such that the $m$-th entry is changed from $u_{m}+1$ to $u_{m}$ will produce a product term $ f(u_m+1)$ in (\ref{kcrt1}) and a product term $\lambda$ in (\ref{kcrt2}).

If we denote the number of state transitions in the transition loop
\begin{equation} \label{loop}
    (\bm{u^{(1)}},\bm{u^{(2)}},\cdots,\bm{u^{(n)}},\bm{u^{(1)}})
\end{equation}
such that the $(\sum\limits_{w=1}^{v-1}M_w + m)$-th state entry is changed from $u_{v,m}$ to $u_{v,m}+1$ by $n_{v,m,u_{v,m}}^{+}$ times, and the number of transitions such that the $(\sum\limits_{w=1}^{v-1}M_w + m)$-th entry is changed from $u_{v,m}+1$ to $u_{v,m}$ by $n_{v,m,u_{v,m}}^{-}$ times, then

\begin{equation}
\label{leqn}
\begin{aligned}
&\text{Eq.} (\ref{kcrt1}) \\
&= \prod_{w=1}^{V}\prod_{m=1}^{M_w}\prod_{u_{w,m}=1}^{K} \lambda^{n_{w,m,u_{w,m}}^{+}}  f_w^{n_{w,m,u_{w,m}}^{-}}(u_{w,m}+1),
\end{aligned}
\end{equation}

\begin{equation}
\label{reqn}
\begin{aligned}
& \text{Eq.} (\ref{kcrt2}) \\
&= \prod_{w=1}^{V}\prod_{m=1}^{M_w}\prod_{u_{w,m}=1}^{K} \lambda^{n_{w,m,u_{w,m}}^{-}}  f_w^{n_{w,m,u_{w,m}}^{+}}(u_{w,m}+1).
\end{aligned}
\end{equation}

To make (\ref{loop}) a closed loop, we must have 
\begin{equation}
\label{neqn}
n_{v,m,u_{v,m}}^{+} = n_{v,m,u_{v,m}}^{-}
\end{equation}
for all $v = 1,2,\cdots,V$, $m = 1,2,\cdots,M_v$ and $u_{v,m} = 1,2,\cdots,K$. Substituting (\ref{neqn}) into (\ref{leqn}) and (\ref{reqn}) and we get that (\ref{kcrt1}) equals to (\ref{kcrt2}). Thus $\bm{U}(t)$ is reversible.

\section{Proof of Theorem \ref{Theo_complex}}\label{Proof: Complex}
The calculation of a single $C(N,\bm{M})$ term involves at most $K_v$ summations. Also, there are at most $N \le \bm{M}^T\bm{K}$ such terms for some specific $\bm{M}$. Hence the computational complexity for all $C(N,\bm{M})$ terms for a {VBS} pool sized $\bm{M}$ is bounded as
\begin{equation}
\begin{aligned}
C_1 &\le (\max_v{K_v})\bm{M}^{T}\bm{K}|\bm{M}|\le (\max_v{K_v})^2|\bm{M}|^2,
\end{aligned}
\end{equation}
where $|\bm{M}| = \sum_{w=1}^VM_w.$ At the same time, the calculation of $R(N,\bm{M})$ involves only a single subtraction (or summation). Again, there are at most $\bm{M}^T\bm{K}$ such terms for some specific $\bm{M}$. Therefore the computational complexity of $R(N,\bm{M})$ for a {VBS} pool with size vector $\bm{M}$ is bounded as
\begin{equation}
\begin{aligned}
C_2 &\le \bm{M}^T\bm{K}|\bm{M}|\\
&\le (\max_v{K_v})|\bm{M}|^2.
\end{aligned}
\end{equation}
All together, the overall computational complexity is bounded as
\begin{equation}
\begin{aligned}
C &= C_1 + C_2\\
&\le \left[(\max_v{K_v})^2+\max_v{K_v}\right]|\bm{M}|^2.
\end{aligned}
\end{equation}
This bound is essentially quadratic in the pool size $|\bm{M}|$. Therefore, the computational complexity of blocking probability should also be quadratic in the pool size.

\section{Proof of Theorem \ref{Theo_bpa}} \label{Proof: bpa}
From the definition of $\tilde{S}_{\bm{M}}$ we know
\begin{equation}
\begin{aligned}
\lim\limits_{|\bm{M}| \to \infty}\tilde{S}_{\bm{M}} &= \lim\limits_{|\bm{M}| \to \infty}\frac{1}{|\bm{M}|}\sum_{w=1}^{V}\sum_{m=1}^{M_w} \tilde{U}_{w,m}\\
&=\lim\limits_{|\bm{M}| \to \infty}\sum_{w=1}^{V}\frac{M_w}{|\bm{M}|}\frac{\sum_{m=1}^{M_w} \tilde{U}_{w,m}}{M_w}\\
&=\lim\limits_{|\bm{M}| \to \infty}\sum_{w=1}^{V}\beta_w \tilde{S}_{M_w}.\\
\end{aligned}
\end{equation}
According to the Central Limit Theorem, $\tilde{S}_{M_w}$ converges in distribution to a normal random variables as $|\bm{M}| \to \infty$:
\begin{equation}
\lim\limits_{M_v \to \infty}\tilde{S}_{M_w} \sim N(\mu_w,\frac{\sigma_w^2}{M_w}).
\end{equation}
Since $\tilde{U}_{v,m}$ are independent random variables for all $v$ and $m$,  $\tilde{S}_{M_v}$ are also independent. Therefore $\tilde{S}_{\bm{M}}$ will also converge to a normal distributed random variable:
\begin{equation}\label{sm}
\begin{aligned}
\lim\limits_{|\bm{M}| \to \infty}\tilde{S}_{\bm{M}} 
&= \lim\limits_{|\bm{M}| \to \infty}\sum_{w=1}^{V}\beta_w \tilde{S}_w\\
&\sim N(\sum_{w=1}^{V}\beta_w \mu_w,\sum_{w=1}^{V}\beta_w\frac{\sigma_w^2}{M_w})\\
&\sim N(\mu,\frac{\sigma^2}{|\bm{M}|}).
\end{aligned}
\end{equation}

To express the blocking probability in terms of this normal distribution, we next establish a relationship between the stationary distributions of $\bm{U}$ and $\bm{\tilde{U}}$. Let $\tilde{P}_0$ be the probability of zero state for $\bm{\tilde{U}}$, then from the product-form stationary distribution of $\bm{U}$ and $\bm{\tilde{U}}$ we can get the following scaling relationship between the stationary distribution of $\bm{U}$ and $\bm{\tilde{U}}$:
\begin{equation} \label{scaling}
	\frac{\Pr\left\{ \bm{U} = \bm{u} \right\}}{P_0} =  \frac{\Pr\left\{ \bm{\tilde{U}} = \bm{u} \right\}}{\tilde{P}_0}.
\end{equation}
From the definition of $P_0$ and $\tilde{P}_0$, the following relationship exists:
\begin{equation}
\begin{aligned}
\frac{P_0}{\tilde{P}_0} &= \Pr\left\{ \sum_{w=1}^{V}\sum_{m=1}^{M_w} \tilde{U}_{w,m} \le N \right\}^{-1}\\
&= \Pr\left\{ \frac{1}{|\bm{M}|}\sum_{w=1}^{V}\sum_{m=1}^{M_w} \tilde{U}_{w,m} \le \frac{N}{|\bm{M}|} \right\}^{-1}\\
&= \Pr\left\{ \tilde{S}_{\bm{M}} \le \frac{N}{|\bm{M}|} \right\}^{-1}.
\end{aligned}
\end{equation}
Notice in the above relationship, $P_0/\tilde{P}_0$ is determined by the probability distribution of $\tilde{S}_M$. Therefore we can use the large-pool limit of $\tilde{S}_{\bm{M}}$ to get the following approximation
\begin{equation} \label{PRatioApprox}
\begin{aligned}
\lim\limits_{|\bm{M}| \to \infty}\frac{P_0}{\tilde{P}_0} 
= \left[1-Q(\frac{N}{|\bm{M}|})\right]^{-1},
\end{aligned}
\end{equation}
where $q(x)$ and $Q(x)$ are respectively the probability density function (PDF) and cumulative tail distribution of $N(\mu,\frac{\sigma^2}{|\bm{M}|})$. With these relationships, we can now approximate the blocking probability:
\begin{equation} \label{pbca}
\begin{aligned}
\lim\limits_{|\bm{M}| \to \infty}P^{bc}
&= \lim\limits_{|\bm{M}| \to \infty}\Pr\left\{ \sum_{w=1}^{V}\sum_{m=1}^{M_w}U_{w,m} = N \right\} \\
&= \lim\limits_{|\bm{M}| \to \infty}\frac{P_0}{\tilde{P}_0}  \Pr\left\{ \sum_{w=1}^{V}\sum_{m=1}^{M_w}\tilde{U}_{w,m} = N \right\} \\
&= \lim\limits_{|\bm{M}| \to \infty}\frac{P_0}{\tilde{P}_0}  \Pr\left\{ \tilde{S}_{\bm{M}} = \frac{N}{|\bm{M}|} \right\}  \\
&= \frac{P_0}{\tilde{P}_0}  \frac{1}{|\bm{M}|}  q\left(\frac{N}{|\bm{M}|}\right),
\end{aligned}
\end{equation}
\begin{equation} \label{pbra}
\begin{aligned}
&\lim\limits_{|\bm{M}| \to \infty}P_v^{br}\\
=& \lim\limits_{|\bm{M}| \to \infty}\Pr\left\{ U_{v,1} = K_v, \sum_{w=1}^{V}\sum_{m=1}^{M_w}U_{w,m} < N \right\} \\
=& \lim\limits_{|\bm{M}| \to \infty}\frac{P_0}{\tilde{P}_0}  \Pr\left\{ \tilde{U}_{v,1} = K_v, \sum_{w=1}^{V}\sum_{m=1}^{M_w}\tilde{U}_{w,m} < N \right\}  \\
=& \lim\limits_{|\bm{M}| \to \infty}\frac{P_0}{\tilde{P}_0}   \Pr\left\{ \tilde{U}_{v,1} = K_v  \right\}  \cdot \\
&\Pr\left\{ \sum_{m=2}^{M_v}\tilde{U}_{v,m} + \sum_{w \neq v}\sum_{m=1}^{M_w}\tilde{U}_{w,m} < N-K_v \right\}\\
=& \lim\limits_{|\bm{M}| \to \infty}\frac{P_0}{\tilde{P}_0}  \tilde{P}_v^{br}  \Pr\left\{ \tilde{S}_{\bm{M}}-\frac{\tilde{U}_{v,1}}{|\bm{M}|-1} < \frac{N-K_v}{|\bm{M}|-1} \right\}\\
=& \frac{P_0}{\tilde{P}_0}   \tilde{P}_v^{br}  \left[1-Q(\frac{N}{|\bm{M}|})\right].\\
\end{aligned}
\end{equation}
Notice the fifth equality of (\ref{pbra}) holds because, as $|\bm{M}| \to \infty$, $N$ should also approach infinity as $N > |\bm{M}|\mu$. Hence
\begin{equation}\label{Q_approx}
\lim\limits_{|\bm{M}| \to \infty}Q(\frac{N-K_v}{|\bm{M}|}) = Q(\frac{N}{|\bm{M}|}).
\end{equation}

Also,$\lim\limits_{|\bm{M}| \to \infty}Q(\frac{N}{|\bm{M}|}) = e^{-\alpha^2/2}$. Therefore, the approximation for the overall session blocking probability of class-$v$ VBSs is
\begin{equation}
\begin{aligned}
\lim\limits_{|\bm{M}| \to \infty}P_v^b
=& \lim\limits_{|\bm{M}| \to \infty}(P^{bc} + P_v^{br}) \\
=& \left[1-Q(\frac{N}{|\bm{M}|})\right]^{-1} \cdot\\
& \left\{\frac{1}{|\bm{M}|}  q\left(\frac{N}{|\bm{M}|}\right)   + \tilde{P}_v^{br}  \left[1-Q(\frac{N}{|\bm{M}|})\right]\right\}\\
=& \frac{\sqrt{|\bm{M}|}}{|\bm{M}|\sqrt{2 \pi \sigma^2}} \frac{e^{-\alpha^2/2}}{1-e^{-\alpha^2/2}} + \tilde{P}_v^{br} \\
=& \frac{1}{\sqrt{2 \pi |\bm{M}| \sigma^2}} \frac{1}{e^{\alpha^2/2}-1} + \tilde{P}_v^{br}.
\end{aligned}
\end{equation}

\section{Proof of Theorem \ref{Theo_lpl}} \label{Proof: lpl}
The first part of our proof is straightforward using (\ref{sm}). Since $\tilde{S}_{\bm{M}}$ will also converge to a normal distributed random variable $N(\mu,\frac{\sigma^2}{|\bm{M}|})$ as $|\bm{M}| \to \infty$, according to the strong law of large numbers:
\begin{equation}
\begin{aligned}
&\Pr\left\{\lim\limits_{|\bm{M}| \to \infty} \eta = \frac{|\bm{M}|\mu}{N} \right\}\\
&= \Pr\left\{\lim\limits_{|\bm{M}| \to \infty} \frac{\sum_{w=1}^{V}\sum_{m=1}^{M_w} \tilde{U}_{w,m}}{N} = \frac{|\bm{M}|\mu}{N} \right\}\\
&= \Pr\left\{\lim\limits_{|\bm{M}| \to \infty} \tilde{S}_{\bm{M}} = \mu\right\} = 1.
\end{aligned}
\end{equation}
Hence $\eta \xrightarrow{\mathrm{a.s.}} \frac{|\bm{M}|\mu}{N}$. Also, it is easy to see that $U_{v,m} \le K_v$ and $\Pr\left\{ U_{v,m} < K_v \right\} >0$. Therefore $\mu_v < K_v$ and \begin{equation}
\begin{aligned}
\mu &= \sum_{w=1}^{V}\mu_w \beta_w \\
 &< \sum_{w=1}^{V}K_w \beta_w
 = \frac{\sum_{w=1}^{V}K_w M_w}{|\bm{M}|}
 = \frac{\bm{M}^T\bm{K}}{|\bm{M}|}\\
 &< \frac{N}{|\bm{M}|}.
\end{aligned}
\end{equation} Thus $\frac{|\bm{M}|\mu }{N} < 1$.




\ifCLASSOPTIONcaptionsoff
  \newpage
\fi

\bibliographystyle{IEEEtran}
\bibliography{IEEEabrv,myref}

%

\begin{IEEEbiography}{Jingchu Liu}
(S'14) received his B.S. degree from Department of Electronic Engineering of Tsinghua University, China, in 2012. He is currently a PhD student at the Department of Electronic Engineering, Tsinghua University. From October 2015 to April 2016, he visited the Autonomous Networks Research Group, Ming Hsieh Department of Electrical Engineering, University of Southern California, CA, USA. His research interests include cloud-based wireless networking, data-driven network management, network data analytics, and green wireless communications.
\end{IEEEbiography}

\begin{IEEEbiography}{Sheng Zhou}
(S'06-M'12) received the B.E. and Ph.D. degrees in electronic engineering from Tsinghua University, Beijing, China, in 2005 and 2011, respectively. From January to June 2010, he was a visiting student at the Wireless System Lab, Department of Electrical Engineering, Stanford University, Stanford, CA, USA. He is currently an Assistant Professor with the Department of Electronic Engineering, Tsinghua University. His research interests include cross-layer design for multiple antenna systems, cooperative transmission in cellular systems, and green wireless communications.	Dr. Zhou coreceived the Best Paper Award at the Asia-Pacific Conference on Communication in 2009 and 2013, the 23th IEEE International Conference on Communication Technology in 2011, and the 25th International Tele-traffic Congress in 2013.
\end{IEEEbiography}

\begin{IEEEbiography}{Jie Gong}
(S'09-M'13) received his B.S. and Ph.D. degrees in Department of Electronic Engineering in Tsinghua University, Beijing, China, in 2008 and 2013, respectively. From July 2012 to January 2013, he visited Institute of Digital Communications, University of Edinburgh, Edinburgh, UK. During 2013-2015, he worked as a postdoctorial scholar in Department of Electronic Engineering in Tsinghua University, Beijing, China. He is currently an associate research fellow in School of Data and Computer Science, Sun Yat-sen University, Guangzhou, Guangdong Province, China. He was a co-recipient of the Best Paper Award from IEEE Communications Society Asia-Pacific Board in 2013. His research interests include Cloud RAN, energy harvesting and green wireless communications.
\end{IEEEbiography}

\begin{IEEEbiography}{Zhisheng Niu}
(M'98-SM'99-F'12) graduated from Beijing Jiaotong University, China, in 1985, and got his M.E. and D.E. degrees from Toyohashi University of Technology, Japan, in 1989 and 1992, respectively.  During 1992-94, he worked for Fujitsu Laboratories Ltd., Japan, and in 1994 joined with Tsinghua University, Beijing, China, where he is now a professor at the Department of Electronic Engineering.  He is also a guest chair professor of Shandong University, China.  His major research interests include queueing theory, traffic engineering, mobile Internet, radio resource management of wireless networks, and green communication and networks.
	
Dr. Niu has been an active volunteer for various academic societies, including Director for Conference Publications (2010-11) and Director for Asia-Pacific Board (2008-09) of IEEE Communication Society, Membership Development Coordinator (2009-10) of IEEE Region 10, Councilor of IEICE-Japan (2009-11), and council member of Chinese Institute of Electronics (2006-11).  He is now a distinguished lecturer (2012-15) and Chair of Emerging Technology Committee (2014-15) of IEEE Communication Society, a distinguished lecturer (2014-16) of IEEE Vehicular Technologies Society, a member of the Fellow Nomination Committee of IEICE Communication Society (2013-14), standing committee member of Chinese Institute of Communications (CIC, 2012-16), and associate editor-in-chief of IEEE/CIC joint publication China Communications.
	
Dr. Niu received the Outstanding Young Researcher Award from Natural Science Foundation of China in 2009 and the Best Paper Award from IEEE Communication Society Asia-Pacific Board in 2013.  He also co-received the Best Paper Awards from the 13th, 15th and 19th Asia-Pacific Conference on Communication (APCC) in 2007, 2009, and 2013, respectively, International Conference on Wireless Communications and Signal Processing (WCSP'13), and the Best Student Paper Award from the 25th International Teletraffic Congress (ITC25).  He is now the Chief Scientist of the National Basic Research Program (so called "973 Project") of China on "Fundamental Research on the Energy and Resource Optimized Hyper-Cellular Mobile Communication System" (2012-2016), which is the first national project on green communications in China.  He is a fellow of both IEEE and IEICE.
\end{IEEEbiography}

\begin{IEEEbiography}{Shugong Xu}
(SM‘06-F’16) received his Ph.D. degree from Huazhong University of Science and Technology in 1996. He is currently the Director of Intel Collaborative Research Institute for Mobile Networking and Computing (ICRI-MNC). Before he took this role, he was Principal Investigator of ICRI-MNC and co-directing the research programs for this new institute after he joined Intel Labs in Sep 2013. Prior to that, he was a research director and principal scientist at the Communication Technologies Laboratory, Huawei Technologies. Among his responsibilities at Huawei, Shugong founded and directed Huawei's green radio research program GREAT. He was also the Chief Scientist and lead for the China National 863 project on End-to-End Energy Efficient Networks. Prior to joining Huawei in 2008, he was with Sharp Laboratories of America as a senior research scientist. Shugong published more than 60 peer-reviewed research papers in top international conferences and journals. One of his most referenced paper has over 1200 Google Scholar citations, in which the findings were among the major triggers for the research and standardization of IEEE 802.11S. Shugong has over 20 US patents granted. Some of these technologies have been adopted in international standards including IEEE 802.11, 3GPP LTE and DLNA. His recent research interests include mobile networking and computing, next generation wireless communication platform, network intelligence and SDN/NFV, etc.
\end{IEEEbiography}

\end{document}